\documentclass[11pt]{article}

\usepackage[letterpaper,margin=1in]{geometry}
\usepackage[T1]{fontenc}
\usepackage[utf8]{inputenc}
\usepackage{lmodern}
\usepackage{microtype}
\usepackage{amsmath,amssymb,amsthm,mathtools}
\usepackage{booktabs,array}
\usepackage{xcolor}
\usepackage{algorithm}
\usepackage{algpseudocode}
\usepackage{enumitem}
\usepackage{aliascnt}
\usepackage[numbers,sort&compress]{natbib}
\usepackage[colorlinks=true,allcolors=blue]{hyperref}
\usepackage[capitalize,nameinlink,noabbrev]{cleveref}
\usepackage{needspace}

\numberwithin{equation}{section}

\newtheorem{theorem}{Theorem}
\newaliascnt{lemma}{theorem}
\newtheorem{lemma}[lemma]{Lemma}
\aliascntresetthe{lemma}
\newaliascnt{proposition}{theorem}
\newtheorem{proposition}[proposition]{Proposition}
\aliascntresetthe{proposition}
\newaliascnt{corollary}{theorem}
\newtheorem{corollary}[corollary]{Corollary}
\aliascntresetthe{corollary}
\theoremstyle{definition}
\newaliascnt{definition}{theorem}
\newtheorem{definition}[definition]{Definition}
\aliascntresetthe{definition}
\theoremstyle{remark}
\newaliascnt{remark}{theorem}

\aliascntresetthe{remark}
\crefname{lemma}{Lemma}{Lemmas}
\Crefname{lemma}{Lemma}{Lemmas}
\crefname{proposition}{Proposition}{Propositions}
\Crefname{proposition}{Proposition}{Propositions}
\crefname{corollary}{Corollary}{Corollaries}
\Crefname{corollary}{Corollary}{Corollaries}
\crefname{definition}{Definition}{Definitions}
\Crefname{definition}{Definition}{Definitions}
\crefname{remark}{Remark}{Remarks}
\Crefname{remark}{Remark}{Remarks}
\usepackage{thm-restate}

\newcommand{\E}{\mathbb{E}}
\newcommand{\Prob}{\mathbb{P}}
\newcommand{\ind}{\mathbf{1}}
\newcommand{\TV}{d_{\mathrm{TV}}}
\newcommand{\dtv}{\TV}
\newcommand{\NACOND}{\mathsf{NACOND}}
\newcommand{\COND}{\mathsf{COND}}
\newcommand{\SAMP}{\mathsf{SAMP}}
\newcommand{\Ber}{\operatorname{Ber}}

\newcommand{\Poi}{\operatorname{Poi}}

\newcommand{\F}{\mathcal{F}}

\newcommand{\Test}{\textsc{NonadaptiveEquivalence}}

\renewcommand{\epsilon}{\varepsilon}

\newcommand{\cI}{\mathcal{I}}

\title{Tight Bounds for Equivalence Testing \\ with Non-Adaptive Conditional Samples}
\author{Gautam Kamath\footnote{New York University. {\tt g@csail.mit.edu}. Supported by an NSERC Discovery Grant and an Ontario Early Researcher Award.}}

\begin{document}
\maketitle

\begin{abstract}
	We study distribution testing with access to non-adaptive conditional samples.
	Specifically, we give tight bounds for equivalence testing, determining whether two unknown distributions are equal to or $\varepsilon$-far from each other in total variation distance. 
	Our algorithm and lower bound show that $\tilde \Theta\left(\frac{\log n}{\varepsilon^2}\right)$ queries are necessary and sufficient for this problem. 
	These results demonstrate that the complexity of uniformity, identity, and equivalence testing with non-adaptive conditional samples are all $\tilde \Theta(\log n)$. 
\end{abstract}

\section{Introduction}

Hypothesis testing is a classic problem in statistics: given samples from a distribution, determine whether or not it satisfies some property.
Within the field of theoretical computer science, it has been studied under the name of \emph{distribution testing}, as a subproblem within the broader area of \emph{property testing}~\cite{Goldreich17}. 
Study of distribution testing was initiated by seminal works of Goldreich, Goldwasser, and Ron~\cite{GoldreichGR96}, Goldreich and Ron~\cite{GoldreichR00}, and Batu, Fortnow, Rubinfeld, Smith, and White~\cite{BatuFRSW00}. 
The problem has since received extensive attention (see, e.g.,~\cite{BatuFFKRW01,BatuKR04,Paninski08,AcharyaDJOP11,BhattacharyyaFRV11,Valiant11,ValiantV11a,IndykLR12,DaskalakisDSVV13,ChanDVV14,ValiantV17a,Waggoner15,AcharyaD15,AcharyaDK15,BhattacharyaV15,DiakonikolasKN15a,DiakonikolasK16,Canonne16,BlaisCG17,BatuC17,CanonneDKS17,DaskalakisDK18,DaskalakisKW18,DiakonikolasGPP18,DiakonikolasGKPP21,CanonneJKL22}, and~\cite{Rubinfeld12,BalakrishnanW18,Kamath18, Canonne20b, Canonne22} for surveys).

We now have a rich understanding of many core problems in this area.
For example, the most classic question is \emph{uniformity testing}: given samples from a discrete distribution $p$ over $[n]$, test whether it is uniform over the domain versus $\varepsilon$-far from the uniform distribution in statistical distance. 
We have tight upper and lower bounds for this problem, demonstrating that the sample complexity of uniformity testing is $\Theta(\sqrt{n}/\varepsilon^2)$~\cite{GoldreichR00, Paninski08, ValiantV14, DiakonikolasKN15a, AcharyaDK15, DiakonikolasGPP18}. 
In this work, we are concerned with \emph{equivalence testing} (sometimes called \emph{two-sample testing} or \emph{closeness testing}): given samples from two discrete distributions $p$ and $q$ over $[n]$, test whether they are equal versus $\varepsilon$-far from each other in statistical distance. 
The sample complexity for this problem is the more complex expression $\Theta\left(\max\left\{\frac{\sqrt{n}}{\varepsilon^2}, \frac{n^{2/3}}{\varepsilon^{4/3}}\right\}\right)$~\cite{Valiant11, ChanDVV14}.

Observe that the sample complexity for both these problems (and many others in distribution testing) is (strongly) sublinear in the domain size $n$: ignoring dependence on $\varepsilon$, they are both $\Theta(n^c)$ for some $0 < c < 1$. 
However, these bounds are tight, which can be problematic when the domain size $n$ is extremely large and thus any polynomial in $n$ is intractable. 
For example, testing uniformity on the $d$-dimensional Boolean hypercube requires $\Theta(2^{d/2})$ samples, which remains exponential in the dimension.
Motivated by testing over extremely large domains, Canonne, Ron, and Servedio~\cite{CanonneRS14, CanonneRS15} and Chakraborty, Fischer, Goldhirsh, and Matsliah~\cite{ChakrabortyFGM13, ChakrabortyFGM16} introduced the \emph{conditional sampling model}. 
This model gives the algorithm stronger query access to the distributions being tested: rather than passively sampling from the distribution, the algorithm \emph{queries} the distribution by specifying a query set $S \subseteq [n]$ and receiving a sample from the distribution \emph{conditional on being from $S$}. 
We use $\COND$ to denote this type of conditional sampling access; in contrast, we use $\SAMP$ to denote the standard sampling oracle.

$\COND$  can significantly reduce the complexity of many distribution testing problems.
For example, uniformity testing (and the more general problem of \emph{identity testing}, which tests whether $p$ is equal to a specific distribution on $[n]$) requires only $\tilde \Theta(1/\varepsilon^2)$ conditional samples~\cite{CanonneRS14,FalahatgarJOPS15}, entirely independent of the domain size, compared to $\Theta(\sqrt{n}/\varepsilon^2)$ samples under $\SAMP$. 
Equivalence testing also enjoys dramatic savings: under $\COND$, equivalence testing needs only $\tilde \Theta(\log \log n)$ samples~\cite{FalahatgarJOPS15,AcharyaCK15b,ChakrabortyCK24,AdarFL26}, doubly logarithmic in the domain size, compared to $\Theta\left(\max\left\{\frac{\sqrt{n}}{\varepsilon^2}, \frac{n^{2/3}}{\varepsilon^{4/3}}\right\}\right)$ samples under $\SAMP$.
Motivated by the power of $\COND$, there has been significant investigation into its implications on distribution testing~\cite{Canonne15,FalahatgarJOPS15,AcharyaCK15b,RonT16,FischerLV17,SardharwallaSJ17,BlaisCG17,KamathT19,Narayanan21,ChakrabortyKM23,ChakrabortyCKM24,AdarFL26,BiswasBCS26}, in particular under the more restrictive subcube conditioning model~\cite{BhattacharyyaC18,CanonneCKLW21,ChenJLW21,BlancaCSV23, AdarFL24, ChenM24, MeelKP25,ChakrabartyCRSW25,GayHKO26,LeeNQR26}, as well as for adjacent problems such as group testing~\cite{AcharyaCK15a}, sublinear algorithms~\cite{GouleakisTZ17}, and crowdsourcing~\cite{GouleakisTZ18}.

However, most algorithms in the conditional sampling model are \emph{adaptive}: they assume that the results of a query can be observed before the algorithm specifies the next query set. 
One can thus naturally consider a \emph{non-adaptive} variant, where the algorithm must specify all query sets in advance of observing their results. 
We use $\NACOND$ to denote the non-adaptive conditional sampling model. 
What is the difference between these models -- how powerful is adaptivity? 
Prior work by Acharya, Canonne, and Kamath~\cite{AcharyaCK15b} showed that $\NACOND$ can be \emph{exponentially weaker} than $\COND$: they showed that even testing uniformity under $\NACOND$ requires $\Omega(\log n)$ samples, whereas the harder problem of testing equivalence requires only $O(\log \log n)$ samples~\cite{FalahatgarJOPS15}. 
However, later work of Kamath and Tzamos showed that $\NACOND$ remains \emph{exponentially stronger} than $\SAMP$: they give $O(\operatorname{poly}\log n)$ sample algorithms for uniformity, identity, and equivalence testing under $\NACOND$, whereas even uniformity testing under $\SAMP$ requires $\Omega(\sqrt{n})$ samples~\cite{Paninski08}. 

Thus, prior work has established exponential (or greater) separations between $\SAMP$, $\COND$, and $\NACOND$ for these testing problems.
However, a significant algorithmic gap was left by Kamath and Tzamos~\cite{KamathT19}.
Under $\NACOND$, they gave an $\tilde O(\log n)$-sample algorithm for uniformity testing, which nearly matches the $\Omega(\log n)$ lower bound of Acharya, Canonne, and Kamath~\cite{AcharyaCK15b}. 
However, their algorithms for identity and equivalence testing were significantly worse, at $O(\log^2 n)$ and $O(\log^{12} n)$ samples respectively, significantly greater than the $\Omega(\log n)$ lower bound (inherited from the easier problem of uniformity).
This quantitative gap is in sharp contrast to both $\SAMP$ and $\COND$, where we now have an essentially tight understanding of all three testing problems. 
A recent work of Chakraborty, Chakraborty, Kumar, and Meel (CCKM)~\cite{ChakrabortyCKM24} provided an algorithm for equivalence testing that requires $\tilde O(\log n)$ $\COND$ samples with \emph{only one round} of adaptivity.

\section{Results and Discussion}

We settle the complexity of equivalence testing with non-adaptive conditional samples. 
Specifically, we show that $\tilde \Theta\left(\frac{\log n}{\varepsilon^2}\right)$ samples are both necessary and sufficient for equivalence testing. 

\begin{theorem}
	\label{thm:equiv-main}
	Let $0 < \varepsilon \leq 1/2$, and let $p$ and $q$ be two (unknown) distributions over $[n]$. 
	There exists an algorithm that takes $\tilde O \left(\frac{\log n}{\varepsilon^2}\right)$ non-adaptive conditional samples from $p$ and $q$ and, with probability $\geq 2/3$, distinguishes between the case $p = q$ and $\dtv(p,q) \geq \varepsilon$. 
	Furthermore, any algorithm with such guarantees requires at least $\Omega\left(\frac{\log n}{\varepsilon^2}\right)$ non-adaptive conditional samples. 
\end{theorem}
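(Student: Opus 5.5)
The theorem has two parts, and I would prove them separately: an upper bound via a non-adaptive tester, and a lower bound via a reduction to known hardness.

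\textbf{Lower bound.} Equivalence testing is at least as hard as uniformity testing: fix $q$ to be uniform, and any $\NACOND$ equivalence tester becomes a uniformity tester, since queries to $q$ can be simulated for free. So it suffices to show that uniformity testing under $\NACOND$ needs $\Omega(\log n/\varepsilon^2)$ queries. The $\Omega(\log n)$ bound of Acharya, Canonne, and Kamath uses a random-scale construction. The distribution is uniform on a random subset whose size is chosen from roughly $\log n$ geometric scales, so a non-adaptive query set is informative only at the scale matching its size. I would sharpen this in the following way.
\begin{itemize}[leftmargin=*]
\item Perturb the masses by $1\pm\varepsilon$ on a random half of the support. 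This gives total variation distance $\Omega(\varepsilon)$.
\item Argue that each query at the matching scale yields only $O(\varepsilon^2)$ information, for instance via $\chi^2$ or KL divergence between the Bernoulli outcomes.
\item Conclude that each query has probability about $1/\log n$ of landing at the correct scale. Summing gives mutual information $O(q\varepsilon^2/\log n)$, which forces $q=\Omega(\log n/\varepsilon^2)$.
\end{itemize}
Formalizing this needs care to handle arbitrary query sets, not just uniformly random ones. I would bucket each query set by $\log|S\cap \text{support}|$ and bound the contribution from each bucket.

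\textbf{Upper bound.} I would follow the bucketing-plus-random-subsets template of Kamath--Tzamos and CCKM, but remove the adaptivity.
\begin{itemize}[leftmargin=*]
\item For each scale $j=0,\dots,\log n$, draw random sets $S$ that include each element independently with probability $2^{-j}$.
\item On each such $S$, take a small number of conditional samples from both $p$ and $q$.
\item Run a collision or $\chi^2$-style statistic comparing $p_S$ and $q_S$.
\end{itemize}
The structural lemma I need is this: if $\dtv(p,q)\ge\varepsilon$, then for some scale $j$, a random $S$ at that scale has, with constant probability, a pair of elements that carry the discrepancy and are comparably weighted. Concretely, I want $\E_S[\dtv(p_S,q_S)^2]\gtrsim \varepsilon^2/\text{polylog}$ at some scale. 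Since the good scale is unknown, the tester spends $\tilde O(1/\varepsilon^2)$ queries at every scale, which totals $\tilde O(\log n/\varepsilon^2)$. At each scale, Kamath--Tzamos-style constant-size-support comparisons suffice. When $|S\cap\mathrm{heavy}|$ is small, $p_S$ and $q_S$ are effectively supported on $O(1)$ elements, and testing closeness on them costs $O(1/\varepsilon^2)$ samples. The final test is a union bound or a max over scales with thresholds, and the completeness side is immediate because $p_S=q_S$ whenever $p=q$.

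\textbf{Main obstacle.} I expect the structural lemma to be the hard step. It must show that the $\varepsilon$ discrepancy survives conditioning at some scale with only polylogarithmic loss, not polynomial loss, and without knowing where the mass lives. The $\log^{12}n$ bound in earlier work came from lossy versions of exactly this step. My first attempt would be the following.
\begin{itemize}[leftmargin=*]
\item Partition $[n]$ into buckets of elements with $p_i+q_i\in(2^{-k-1},2^{-k}]$.
\item Find a bucket $k$ that captures an $\varepsilon/\log$ fraction of the distance, and choose $j\approx k$ minus a small offset.
\item Show that $S$ then contains $O(1)$ elements from heavier buckets and $\Theta(1)$ elements from bucket $k$, so the conditional distance is $\Omega(\varepsilon/\mathrm{polylog})$.
\end{itemize}
To reach $\tilde O(\log n/\varepsilon^2)$ rather than $\log n\cdot\mathrm{poly}(1/\varepsilon)$, I would use a refined averaging argument: weight the scales so that the per-scale sample budgets sum to $\tilde O(1/\varepsilon^2)$ times $\log n$.
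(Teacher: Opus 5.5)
Both halves have genuine gaps, so the plan does not yet prove either bound.

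\textbf{Lower bound.} Reducing to uniformity is the wrong first move. It asks you to prove $\Omega(\log n/\varepsilon^2)$ for \emph{uniformity} testing, which the paper does not establish: its table lists only $\Omega(\log n+1/\varepsilon^2)$. Your random-support construction is also not a uniformity instance. If its null is $U_D$ for a random $D$, that is not a fixed distribution, so you are back in equivalence testing. If its null is $U_{[n]}$, the perturbed distribution on a small random $D$ is far from it whatever $\varepsilon$ is, and any query with $|S\cap D|=1$ returns the same symbol every time. What remains hard is only locating the scale, which is worth $\log n$ but no $\varepsilon^{-2}$. The paper uses exactly the freedom you discard: $\mathcal H_0: p=q=U_D$ on the \emph{same} random $D$. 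Then queries with $|S\cap D|\le 1$ are exactly uninformative, and the $\pm\varepsilon$ perturbation is the only signal.

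Your information accounting also breaks. Averaged over $\sigma$, a single answer is uniform on $S\cap D$ under both hypotheses. So each query carries zero information by itself, and all the signal sits in correlations through the shared $\sigma$. If you instead condition on $\sigma$ to make divergences add, every query with $m=|S\cap D|\ge 2$ carries $\Theta(\varepsilon^2)$ divergence (on average over $\sigma$), however large $m$ is. Then $S=[n]$ is ``at the right scale'' for every $J$, and you recover only $\Omega(1/\varepsilon^2)$. The missing step is that a query with overlap $m$ costs only $O(\varepsilon^2/\sqrt m)$. The paper gets this from the second moment $\E_{\sigma,\sigma'}\prod_i(1+K_i)$, where $\E[K_i]=0$, combined with a Khintchine bound for balanced signs. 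A geometric sum over scales then gives $\E_D[h(|S\cap D|)]=O(1/\log n)$.

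\textbf{Upper bound.} The structural step you flag is where the proof lives, and your concrete plan for it fails in two ways.

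First, bucket TV mass is the wrong quantity. Since $p_S(i)-q_S(i)\propto\sum_{\ell\in S}w_\ell(v_i-v_\ell)$, conditioning only exposes the contrast between a selected symbol's relative bias $v_i$ and the biases of the rest of $S$. If $v$ is constant on all buckets from $k$ on, then $p_S=q_S$ for every $S$ inside them, however much distance they carry. The paper replaces this with the FJOPS tail lemma $\sum_i w_i|v_i-m_i|\ge\frac12\TV(p,q)$. It makes the lemma quadratic via Cauchy--Schwarz and $\sum_{i\in\cI}w_i/T_i\le 1+\ln(16/\varepsilon)$, and combines it with $\E_S[\F(S)\ind_{A_i}\mid j]\gtrsim w_iT_i(v_i-m_i)^2$, where $A_i$ is the event that $i$ is the heaviest selected symbol.

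Second, $p_S,q_S$ need not be effectively $O(1)$-supported, and no scale need be good with constant probability. Put half the mass uniformly on $n/2$ symbols with $p=q$. Spread the other half evenly over $\Theta(\log n)$ weight buckets, with balanced $\pm\varepsilon$ biases inside each bucket. At intermediate scales a typical $S$ has $\dtv(p_S,q_S)=\Theta(\varepsilon)$, so your target $\E_S[\dtv(p_S,q_S)^2]$ holds with no loss. Yet that distance is spread over many light symbols. A set with a dominant discrepant symbol ($\F(S)\gtrsim\varepsilon^2$) appears with probability only $O(1/\log n)$ at each scale.

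The paper handles this as follows:
\begin{itemize}
\item It tests the per-coordinate relative discrepancy $\F(S)$, which costs $\tilde O(1/t)$ samples at threshold $t$ regardless of $|S|$.
\item Each symbol contributes at its own scale through the disjoint events $A_i$. These contributions add under a random scale, giving $\E_{j,S}[\F(S)]\gtrsim\varepsilon^2/(\log n\log(1/\varepsilon))$.
\item Levin's work investment over thresholds converts this expectation into a success probability without squaring the cost.
\end{itemize}
Your $\tilde O(1/\varepsilon^2)$ budget per scale can survive only under such an aggregated analysis, not a per-scale union bound.
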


Results are summarized in Table~\ref{tab:testing-complexities}. 

\begin{table}[H]
\centering
\footnotesize
\setlength{\tabcolsep}{3pt}
\renewcommand{\arraystretch}{1.15}
\begin{tabular}{@{}lccc@{}}
\toprule
Model & Uniformity & Identity & Equivalence \\
\midrule
$\SAMP$
& $\Theta\!\left(\frac{\sqrt n}{\varepsilon^2}\right)$~\cite{Paninski08,ValiantV14}
& $\Theta\!\left(\frac{\sqrt n}{\varepsilon^2}\right)$~\cite{Paninski08,ValiantV14}
& $\Theta\!\left(\max\!\left\{\frac{n^{2/3}}{\varepsilon^{4/3}},\frac{\sqrt n}{\varepsilon^2}\right\}\right)$~\cite{ChanDVV14} \\
\midrule
$\NACOND$
& $\tilde O\!\left(\frac{\log n}{\varepsilon^2}\right)$~\cite{KamathT19}
& $\tilde O\!\left(\frac{\log n}{\varepsilon^2}\right)$~\textbf{This work}
& $\tilde O\!\left(\frac{\log n}{\varepsilon^2}\right)$~\textbf{This work} \\
& $\Omega\!\left(\log n+\frac{1}{\varepsilon^2}\right)$~\cite{AcharyaCK15b}
& $\Omega\!\left(\log n+\frac{1}{\varepsilon^2}\right)$~\cite{AcharyaCK15b}
& $\Omega\!\left(\frac{\log n}{\varepsilon^2}\right)$~\textbf{This work} \\
\midrule
$\COND$
& $\tilde O\!\left(\frac{1}{\varepsilon^2}\right)$~\cite{CanonneRS15}
& $\tilde O\!\left(\frac{1}{\varepsilon^2}\right)$~\cite{FalahatgarJOPS15}
& $\tilde O\!\left(\frac{\log\log n}{\varepsilon}+\frac{1}{\varepsilon^5}\right)$~\cite{AdarFL26} \\
& $\Omega\!\left(\frac{1}{\varepsilon^2}\right)$
& $\Omega\!\left(\frac{1}{\varepsilon^2}\right)$
& $\tilde\Omega\!\left(\log\log n+\frac{1}{\varepsilon^2}\right)$~\cite{ChakrabortyCK24} \\
\bottomrule
\end{tabular}
\caption{Comparison of uniformity, identity, and equivalence testing in different sampling models. Problems get more challenging as one moves up and to the right.}
\label{tab:testing-complexities}
\end{table}

Our results address open questions that were explicitly raised by Kamath and Tzamos~\cite{KamathT19}. 
For equivalence testing under $\NACOND$, we improve upon the sample complexity of their algorithm, from $\tilde O\left(\frac{\log^{12} n}{\varepsilon^2}\right)$ to the near-optimal $\tilde O\left(\frac{\log n}{\varepsilon^2}\right)$.
We also improve the lower bound from $\Omega(\log n)$ samples (inherited from the easier problem of uniformity testing~\cite{AcharyaCK15b}) to the nearly-matching $\Omega\left(\frac{\log n}{\varepsilon^2}\right)$ samples. 
Since identity testing is easier than equivalence testing, the same algorithm improves the $\tilde O\left(\frac{\log^2 n}{\varepsilon^2}\right)$ sample complexity of Kamath and Tzamos~\cite{KamathT19} to $\tilde O\left(\frac{\log n}{\varepsilon^2}\right)$.
These results show that, under $\NACOND$, the sample complexities of testing uniformity, identity, and equivalence are all $\tilde \Theta(\log n)$. 
We find this somewhat surprising, since identity and equivalence testing under $\SAMP$ and $\COND$ have significant gaps: under $\SAMP$, the sample complexity increases by a $\operatorname{poly}(n)$ factor, while under $\COND$, an entirely new dependence on $n$ arises. 

We highlight one interesting question, on the power of rounds of adaptivity of conditional samples. 
Falahatgar, Jafarpour, Orlitsky, Pichapati, and Suresh (FJOPS)~\cite{FalahatgarJOPS15} give an identity testing algorithm that takes $\tilde O(1/\varepsilon^2)$ conditional samples, nearly matching the lower bound of $\Omega(1/\varepsilon^2)$. 
A careful study of their algorithm shows that it can be implemented with only \emph{one round of adaptivity}. 
In contrast, any $\NACOND$ algorithm (which has zero rounds of adaptivity) requires $\tilde \Theta(\log n)$ samples. 
Thus, \emph{a single round of adaptivity} extracts the full power of adaptive conditional samples.
In contrast, for equivalence testing, it is unclear how many rounds of adaptivity are required to reduce the sample complexity from $\tilde O(\log n)$ (optimal under $\NACOND$) to $\tilde O(\log \log n)$ (optimal under $\COND$).

\subsection{Techniques}

\paragraph{Upper Bound.}
Our algorithm (Algorithm~\ref{alg:main}) is very similar to \textsc{Anaconda} of Kamath and Tzamos~\cite{KamathT19}. 
Their algorithm operates by repeatedly isolating a query set $S$ by first choosing a random probability $1/r$, and then independently including each element in $S$ with probability $1/r$. 
If $\dtv(p,q)\geq \varepsilon$, they argue that this approach has a reasonable chance of identifying a set $S$ which is discrepant between the two distributions: specifically, there exists a symbol $i \in S$ such that the (conditional) probability of $i$ has a large absolute difference between $p$ and $q$.
Given such a set $S$, we can easily test for such a discrepancy by examining the empirical distribution. 

Our algorithm diverges from theirs in two primary ways, incorporating ideas from the \emph{adaptive} algorithm for equivalence testing with conditional samples of FJOPS~\cite{FalahatgarJOPS15}. 
Specifically, we replace \textsc{Anaconda}'s discrepancy measure with a chi-squared-like discrepancy measure of FJOPS. 
This discrepancy captures \emph{relative} differences, rather than absolute differences, which can be valuable when the symbols have small mass. 
Additionally, our algorithm incorporates another loop which searches over the magnitude of the discrepancy that we test for.
Intuitively, this can be seen as a better allocation of our testing effort across scales -- if we are searching for a large discrepancy (which requires few samples to test), we take proportionally more random trials in an attempt to capture the discrepancy. 
This tradeoff will show up in the final work investment part of the proof. 

Analysis is inspired by the \emph{adaptive} equivalence testing algorithm of FJOPS. 
Consider the domain to be sorted in decreasing order with respect to the distribution $w = \frac{p+q}{2}$ (``the weight''). 
Let $T_i = \sum_{\ell \geq i} w_\ell$ be the aggregate mass in the tail beginning at symbol $i$. 

FJOPS randomly selects several symbols according to the distribution $w$, in hopes that if $\dtv(p,q) \geq \varepsilon$, one will be discrepant relative to its tail (quantified by the expectation with respect to $w$ in Lemma~\ref{bb:tail}).
For one such symbol $i$, they try to estimate $\rho_i = w_i/T_i$.
Their goal is to form a query set $S$ that simultaneously a) has $i$ be the highest weight symbol selected, and b) have the weight $w_i$ be comparable to the selected weight from tail elements. 
This can be achieved with reasonable probability by selecting each domain element with probability $\rho_i$.
With such a set $S$ in hand, the discrepancy test of Corollary~\ref{cor:test-discrep} allows us to identify that this is the case. 

In our non-adaptive setting, a key challenge is that we are unable to estimate $\rho_i$.
Instead, the argument goes in the opposite order: we randomly select a $\rho$, and for each symbol $i$ such that $\rho \approx w_i$, we consider the event a), that $i$ is the highest weight symbol selected.
We can argue that, under this event, the symbol $i$ contributes significant expected discrepancy (Lemma~\ref{lem:coordinate}).
This time, we do not require b), that the weight of symbol $i$ and the sampled tail are comparable -- Lemma~\ref{lem:quad-tail} supports a different approach, where we examine the expected discrepancy of the random tail. 
We can sum these contributions across all $i$'s (since the corresponding events are disjoint) and achieve a large expected discrepancy in aggregate (Corollary~\ref{cor:expected-discrep}). 

A large expected discrepancy does not suffice for our algorithm: we require that the algorithm actually obtains a query set with significant discrepancy.
We show that, given a large expected discrepancy, our algorithm is likely to sample a sufficiently discrepant set  (quantified in Lemma~\ref{lem:energy}). 
This conversion from expectation to probability is done using Levin's economical work-investment strategy. 
Again, with a query set $S$ in hand, we can identify a discrepancy using the test of Corollary~\ref{cor:test-discrep}.

\paragraph{Lower Bound.}
To construct a hard instance for our lower bound, we describe two distributions $p$ and $q$. 
Both distributions share the same support $D$. 
The support size $|D|$ is chosen randomly from $\{2^1, \dots, 2^{\log_2 n}\}$, and the specific support $D \subseteq [n]$ is chosen uniformly at random from all sets of that size. 
This construction is similar to that of Acharya, Canonne, and Kamath~\cite{AcharyaCK15b}, who used it to prove lower bounds for non-adaptive support size estimation and uniformity testing. 
For the case when $p = q$, we let both distributions be uniform over this support. 
For the case where $\dtv(p,q) \geq \varepsilon$, we apply a Paninski-style construction~\cite{Paninski08} to the support. 
Specifically, we randomly partition $D$ into two sets of size $|D|/2$: we multiplicatively perturb the probability of elements in these sets by $1+\varepsilon$ and $1-\varepsilon$, respectively, for the distribution $p$. 
Keeping the same partition, we perturb the probabilities in the opposite direction ($1-\varepsilon$ and $1+\varepsilon$) for the distribution $q$. 
The two types of randomization each contribute a different factor to the lower bound.
The randomization over the support size contributes a factor of $\log n$, while the random perturbation contributes a factor of $1/\varepsilon^2$.

The proof works by bounding the total variation distance between the transcripts (i.e., the results of the conditional samples) when $p=q$ and $\dtv(p,q) \geq \varepsilon$. 
We start by focusing on the case when the support $D$ and queries are fixed, and the randomness is only in the choice of the random perturbation (Section~\ref{sec:fixed-support}).
We follow the second-moment method, and bound the total variation distance in terms of the expected square of the likelihood ratio of the transcripts. 
Conditioning on a fixed perturbation, the likelihood ratio of the transcript can be written as a product of the likelihood ratio of individual results in the transcript.
However, in the case where $\dtv(p,q) \geq \varepsilon$, we average over a single perturbation shared by all queries, which introduces correlations when conditioning is removed. 
Controlling these correlations is the main challenge in our argument. 
A lengthy calculation allows us to bound the total variation distance between the transcripts as $O\left(\varepsilon^2 \sum_i h(|S_i \cap D|)\right)$, where $h$ is a function solely of the size of the intersection between the query set $S_i$ and the support $D$ (Lemma~\ref{lb:lem:transcript}). 
Along the way, we require a variant of the Khintchine inequality (Lemma~\ref{lb:lem:signs}): rather than taking a weighted sum with uniformly random signs, the random partition of $D$ introduces a balancedness condition which we must account for. 

Next, we incorporate randomness over the choice of the support $D$. 
As hinted in Lemma~\ref{lb:lem:transcript}, the key quantity is the size of the intersection between a query set $S$ and the support $D$: $|S \cap D|$. 
If the intersection is small (i.e., $0$ or $1$), the distribution of the query result will be identical under $p=q$ versus $\dtv(p,q)\geq \varepsilon$. 
On the other hand, if the intersection is large, averaging over the random partition results in a relatively low contribution to the distance between the transcripts.
Indeed, the only queries that incur significant distance are those where $|S \cap D| \approx \Theta(1)$. 
An averaging argument over the size of the set $D$ bounds the expected value of $h(|S \cap D|)$ by $O(1/\log n)$ (Lemma~\ref{lb:lem:scales}).

A careful combination of both sources of randomness (as well as incorporation of the random seed of the algorithm) allows us to bound the total variation distance between two transcripts over $N$ samples as $O(\varepsilon^2 N/\log n)$. 
Accuracy of the testing algorithm requires the distance between the transcripts to be $\Omega(1)$, thus necessitating $N = \Omega(\log n/\varepsilon^2)$, as desired. 

\subsection*{AI Disclosure}
The algorithm and analysis were developed in collaboration with an AI model. 
The model initially failed to attribute ideas to other work in the literature, identify lemmas as variants of known results, and write the solution in a manner easily comprehensible to a human reader. 
To the best of our ability, we attempted to remedy all of these issues, including carefully writing all parts of the paper. 
We take full responsibility for correctness, presentation, and communication of the results. 

\section{Preliminaries}
We study discrete distributions over $[n]$.
We denote the (unknown) distributions of interest using $p$ and $q$, where $p(i)$ is the probability mass placed by distribution $p$ on symbol $i$ (and similarly for $q$). 
For a set $S\subseteq[n]$, let $p(S)=\sum_{i\in S}p(i)$. 
If $p(S)>0$, write $p_S$ for the distribution obtained by conditioning $p$ on $S$:
\[
 p_S(i)=\begin{cases}p(i)/p(S),&i\in S,\\0,&i\notin S.\end{cases}
\]

\begin{definition}[Conditional sampling~\cite{CanonneRS14,ChakrabortyFGM13}]\label{def:oracle}
A conditional sampling oracle for $p$ takes a query set $S\subseteq[n]$ and returns an independent sample from $p_S$. 
If $p(S)=0$, then a symbol $i \in S$ is returned uniformly at random.
\end{definition}

We focus on non-adaptive conditional access ($\NACOND$), in which all query sets must be chosen before observing any of the responses. 
Alternatively, one could consider adaptive conditional access ($\COND$), in which query sets may depend on previous responses. 

\begin{definition}
	The \emph{total variation distance} between distributions $p$ and $q$ is defined as 
	\[\dtv(p,q) = \frac{1}{2} \sum_{i=1}^n |p(i) - q(i)|.\]
\end{definition}

\subsection{Useful Tools from Previous Work}
We borrow two tools from the work of Falahatgar, Jafarpour, Orlitsky, Pichapati, and Suresh~\cite{FalahatgarJOPS15}.

\paragraph{A chi-squared-like discrepancy test.}
Prior algorithms on non-adaptive conditional sampling~\cite{KamathT19} worked as follows: repeatedly draw a random query set and test whether we can identify a discrepancy within the conditional distributions. 
Kamath and Tzamos searched for discrepancies based on absolute differences between conditional symbol probabilities, i.e., symbols for which $|p_S(i) - q_S(i)|$ are sufficiently large. 
We follow the same general framework, but we search for discrepancies based on \emph{relative} differences between conditional symbol probabilities, as proposed by FJOPS~\cite{FalahatgarJOPS15}. 
We employ their algorithm \textsc{Test-equal}: 

\begin{lemma}[Lemma 1 of~\cite{FalahatgarJOPS15}]
	\label{lem:test-equal}
	Fix $0 < t, \delta \leq 1$. 
	Suppose we have samples from two distributions $\Ber(u)$ and $\Ber(v)$, where $u$ and $v$ are unknown.
	There exists an algorithm, \textsc{Test-equal}, that, with probability $\geq 1 - \delta$, distinguishes between the cases where $u = v$ and 
	\[ \frac{(u-v)^2}{(u+v)(2-u-v)} \geq t.\]
	It requires $O\left(\frac{\log (1/\delta)}{t}\right)$ samples from each distribution. 
\end{lemma}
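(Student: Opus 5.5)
The plan is a Poissonized collision-style statistic, applied to both outcomes of the Bernoulli, with median amplification for the confidence.

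\textbf{Reduction to one side.} Let $\Delta = u-v$ and note that $(1-v)-(1-u) = \Delta$ as well. Since $(u+v)+(2-u-v)=2$, we have $\max\{u+v,\,2-u-v\}\geq 1$. Hence
\[
\min\{u+v,\,2-u-v\}\leq (u+v)(2-u-v).
\]
So in the far case, either $\Delta^2/(u+v)\geq t$ or $\Delta^2/((1-u)+(1-v))\geq t$. This reduces the problem to a one-sided test, run once on the counts of outcome $1$ and once on the counts of outcome $0$. We reject equality if either run rejects, and split $\delta$ between the two runs. I will describe the test for outcome $1$ only.

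\textbf{Basic test with constant confidence.} Draw $\Poi(m)$ samples from each distribution, with $m = C/t$. This costs $O(m)$ samples with high probability; we truncate at a constant multiple of $m$, absorbing the truncation failure into the error. Let $X,Y$ be the numbers of $1$'s, so $X\sim\Poi(\lambda_1)$ and $Y\sim\Poi(\lambda_2)$ independently, with $\lambda_1=mu$ and $\lambda_2=mv$. Write $\lambda=\lambda_1+\lambda_2$ and $D=\lambda_1-\lambda_2=m\Delta$. Use the statistic
\[
Z=(X-Y)^2-X-Y .
\]
A standard Poisson moment computation gives $\E[Z]=D^2$ and $\mathrm{Var}(Z)=2\lambda^2+4D^2\lambda$.
\begin{itemize}
\item Under $u=v$: $\E[Z]=0$ and $\mathrm{Var}(Z)=2\lambda^2$.
\item In the far case: $D^2\geq m t\lambda = C\lambda$. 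Also $D^2\leq\lambda^2$, so $\lambda\geq C$.
\end{itemize}

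\textbf{Choice of threshold.} The threshold must not depend on the unknown $\lambda$, so I use the data-dependent rule: reject iff $Z>\tfrac{C}{2}(X+Y)$. Note $X+Y$ concentrates around $\lambda$ within $O(\sqrt{\lambda})$. By Chebyshev, under the null $Z=O(\lambda)$ with probability $\geq 0.9$. The remaining step is to compare that $O(\lambda)$ bound with the threshold $\tfrac{C}{2}\lambda$, which holds for $C$ large. This comparison is the delicate point I expect to be the main obstacle, and it may need the threshold to be $c\,C(X+Y)+c'$ to cover small $\lambda$. In the far case, Chebyshev with $\mathrm{Var}(Z)\leq 2\lambda^2+4D^2\lambda\ll D^4$ shows $Z\geq D^2/2\geq \tfrac{C}{2}\lambda$ with constant probability, since $D^2\geq C\lambda$ and $\lambda\geq C$. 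Combined with $X+Y\approx\lambda$, the test rejects.

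\textbf{Amplification.} The per-coordinate test succeeds with probability $\geq 2/3$ using $O(1/t)$ samples. Repeating it $O(\log(1/\delta))$ times and taking the majority vote gives confidence $1-\delta$ by a Chernoff bound, for a total of $O(\log(1/\delta)/t)$ samples from each distribution, as claimed.
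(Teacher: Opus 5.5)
Your test is essentially the reproduced \textsc{Test-equal} (Algorithm~\ref{alg:test-equal}). It uses Poissonized counts, the statistic $(m_1-m_2)^2-m_1-m_2$ normalized by the observed count, and a majority vote over $O(\log(1/\delta))$ runs. The difference is that you take an OR of two one-sided tests, justified by the correct reduction $\min\{u+v,2-u-v\}\le(u+v)(2-u-v)$. Algorithm~\ref{alg:test-equal} instead adds the two normalized statistics and compares the sum to $mt/2$. There each term is divided by $m_1+m_2-1$, which makes it conditionally unbiased for $k(u-v)^2/(u+v)^2$ given $k=m_1+m_2\ge 2$, and the sum has mean roughly $2m(u-v)^2/((u+v)(2-u-v))\ge 2mt$. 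Either organization works; yours lets you analyze one side in isolation.

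As written, though, two steps are not established.

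\emph{The null case.} You leave it open, and the route you sketch (Chebyshev on $Z$ plus concentration of $X+Y$ around $\lambda$) is awkward exactly when $\lambda$ is small. The clean fix is to condition on $k=X+Y$. Under $u=v$ we have $X\mid k\sim\operatorname{Bin}(k,1/2)$, so $\E[(X-Y)^2\mid k]=k$. With a threshold $c(X+Y)$, rejection means $(2X-k)^2>(1+c)k$. By Markov this has conditional probability at most $1/(1+c)$ for every $k\ge 1$, and it is impossible for $k=0$. The bound is uniform in $\lambda$, so no additive constant $c'$ is needed.

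\emph{The far case.} Your inequality chain has no slack. At the boundary $D^2=C\lambda$, the bound $Z\ge D^2/2\ge \frac{C}{2}\lambda$ implies $Z>\frac{C}{2}(X+Y)$ only when $X+Y<\lambda$. That event has probability only about $1/2$, so ``$X+Y\approx\lambda$'' does not suffice. Lower the threshold to, say, $\frac{C}{4}(X+Y)$. Chebyshev gives $Z\ge D^2/2$ except with probability $8/C^2+16/C$. Also $X+Y<2\lambda$ except with probability $1/\lambda\le 1/C$. Together these force rejection, while the null false-rejection probability per side becomes at most $1/(1+C/4)$.

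With $C$ a large constant, the OR of the two sides is correct with probability at least $2/3$, and your amplification step completes the proof.
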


We reproduce the pseudocode of \textsc{Test-equal} from~\cite{FalahatgarJOPS15} as Algorithm~\ref{alg:test-equal} in Appendix~\ref{sec:test-equal}.

The above lemma is a test for discrepancy between Bernoulli distributions. 
The following corollary allows us to test whether two discrete distributions have a discrepant coordinate. 
The reduction to Lemma~\ref{lem:test-equal} is simple: given a batch of samples, define an indicator for each unique symbol observed, and run the previous Bernoulli test (with a scaled-down failure probability) for each of these events. 
The proof is detailed further in Appendix~\ref{sec:test-discrep}.
\begin{corollary}
	\label{cor:test-discrep}
	Fix $0 < t, \delta \leq 1$. 
	Suppose we have samples from two discrete distributions $p$ and $q$ over $[n]$.
	There exists an algorithm that, with probability $\geq 1 - \delta$, distinguishes between the cases where $p = q$ and there exists some $i \in [n]$ such that 
	\[ \frac{(p(i)-q(i))^2}{(p(i)+q(i))(2-p(i)-q(i))} \geq t.\]
	It requires $O\left(\frac{\log(1/t) + \log (1/\delta)}{t}\right)$ samples from each distribution. 

\end{corollary}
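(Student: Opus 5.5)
The plan is to reduce Corollary~\ref{cor:test-discrep} to Lemma~\ref{lem:test-equal} through a union bound over a small set of candidate symbols. A symbol $i$ can satisfy the discrepancy condition only if it has non-negligible mass: the numerator is at most $(p(i)+q(i))^2$ and $2-p(i)-q(i)\le 2$. If $p(i)+q(i) \le 1$, then $2 - p(i) - q(i) \ge 1$, so the ratio is at most $p(i)+q(i)$. If $p(i)+q(i)>1$, the mass is already large. In both cases a discrepant $i$ must have $p(i)+q(i)\ge t$, so at least one of $p(i),q(i)$ is at least $t/2$. Consequently a batch of $m_0 = O(\log(1/\delta)/t)$ samples from each of $p$ and $q$ contains every discrepant symbol with probability $\ge 1-\delta/3$. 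It suffices to guarantee this for one witness $i^*$, which is exactly what we need, so no union bound over symbols is required here.

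First, I draw $m_0$ samples from each distribution and let $C$ be the set of distinct symbols observed, so $|C|\le 2m_0 = O(\log(1/\delta)/t)$. Second, I draw a fresh, independent batch of $m_1$ samples from each of $p$ and $q$. For each $j\in C$ I form the indicator sequences $\ind[X=j]$ and $\ind[Y=j]$, which are i.i.d.\ samples from $\Ber(p(j))$ and $\Ber(q(j))$, and run \textsc{Test-equal} with threshold $t$ and failure probability $\delta' = \delta/(3|C|)$. The same fresh batch serves all $j\in C$: each individual test only needs its input to consist of i.i.d.\ Bernoulli samples, and a union bound handles the dependence between tests. Since $C$ is independent of the fresh batch, conditioning on $C$ is harmless. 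By Lemma~\ref{lem:test-equal}, $m_1 = O(\log(1/\delta')/t) = O\bigl((\log(1/t)+\log(1/\delta)+\log\log(1/\delta))/t\bigr) = O\bigl((\log(1/t)+\log(1/\delta))/t\bigr)$ suffices. The output is ``far'' if any test reports a difference, and ``equal'' otherwise.

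For correctness: if $p=q$, every $j\in C$ has $p(j)=q(j)$, so by the union bound all tests accept with probability $\ge 1-\delta/3$. If some $i$ is discrepant, then $i\in C$ with probability $\ge 1-\delta/3$, and in that case its test rejects with probability $\ge 1-\delta'$. The total sample count is $m_0+m_1$, which matches the claim.

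The only delicate point is handling $|C|$ in the failure probability. Since $|C|$ is random but deterministically at most $2m_0$, I set $\delta'$ using the bound $2m_0$ rather than $|C|$ itself. This makes $m_1$ fixed in advance, which also keeps the procedure compatible with non-adaptive use, since it only consumes plain samples from a single query set. The $\log(1/t)$ term in the stated bound comes precisely from this $\log|C|$ factor.
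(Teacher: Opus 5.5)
Your proposal is correct and follows essentially the same route as the paper's proof in Appendix~\ref{sec:test-discrep}. Both proofs use a first batch of $O(\log(1/\delta)/t)$ samples to collect at most $2d$ candidate symbols, relying on $p(i)+q(i)\ge t$ for any discrepant symbol, and then one fresh batch shared across all \textsc{Test-equal} calls with failure probability set from the deterministic bound $\delta/(2d)$; your case split for $p(i)+q(i)\ge t$ simply spells out the paper's ``algebraic manipulation.''
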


\paragraph{A Tail-Discrepancy Lemma.}

Let $w, v$ be $n$-dimensional vectors defined as follows:\footnote{Throughout the paper, we use the convention that $0/0 = 0$.}
\begin{equation*}
	w_i =\frac{p(i)+q(i)}2,\quad
	v_i=\frac{p(i)-q(i)}{p(i)+q(i)}.\quad
\end{equation*}
$w$ is a distribution which is a uniform mixture of $p$ and $q$, while $v$ is a rescaled difference between distributions $p$ and $q$. 
Observe that $\sum_{i=1}^n w_i v_i = 0$ and $\sum_{i=1}^n w_i |v_i| = \dtv(p,q)$. 

In the analysis of the algorithm, we assume that $w_1 \geq \dots \geq w_n \geq 0$.
Since this ordering will only be used for the sake of analysis (not by the algorithm itself), this is without loss of generality. 
We further define:
\begin{equation*}
	T_i =\sum_{\ell=i}^n w_\ell,\quad
	m_i=\frac{\sum_{\ell=i}^n w_\ell v_\ell}{T_i}.
\end{equation*}

$T$ and $m$ are vectors which quantify tail properties of $p$ and $q$. 
Specifically, $T$ represents the tail mass of their mixture $w$, and $m$ measures the (rescaled) difference between $p$ and $q$ in this tail. 


The following structural lemma of~\cite{FalahatgarJOPS15} allows us to relate these quantities to the total variation distance between $p$ and $q$.
Specifically, if $p$ and $q$ are far from each other, this manifests in the differences between their local difference and their tail difference (appropriately rescaled).

\begin{lemma}[Lemma 8 of~\cite{FalahatgarJOPS15}]\label{bb:tail}
\begin{equation*}
\sum_{i=1}^{n}w_i|v_i-m_i|
\geq\tfrac12\TV(p,q).
\end{equation*}
\end{lemma}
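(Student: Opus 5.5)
The plan is to prove an exact telescoping identity for $\sum_i w_i|v_i-m_i|$ and then compare it with $\sum_i w_i|m_i|$. By the triangle inequality,
\[
\TV(p,q)=\sum_i w_i|v_i|\le \sum_i w_i|v_i-m_i|+\sum_i w_i|m_i|,
\]
so it suffices to show $\sum_i w_i|m_i|\le\sum_i w_i|v_i-m_i|$.

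\emph{Step 1 (local identity).} Let $A_i=\sum_{\ell\ge i}w_\ell v_\ell=T_im_i$, and set $A_{n+1}=T_{n+1}=0$. Two facts are immediate:
\begin{itemize}
\item $A_1=\sum_\ell w_\ell v_\ell=0$, so $m_1=0$.
\item $A_i-A_{i+1}=w_iv_i$.
\end{itemize}
Substituting $T_i=w_i+T_{i+1}$ into $T_im_i-T_{i+1}m_{i+1}=w_iv_i$ and rearranging gives
\[
w_i(v_i-m_i)=T_{i+1}(m_i-m_{i+1}).
\]
Summing absolute values then yields
\[
\sum_i w_i|v_i-m_i|=\sum_i T_{i+1}|m_i-m_{i+1}|.
\]

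\emph{Step 2 (comparison).} Since $m_1=0$, telescoping and the triangle inequality give $|m_i|\le\sum_{j<i}|m_j-m_{j+1}|$. Multiplying by $w_i$, summing over $i$, and exchanging the order of summation gives
\[
\sum_i w_i|m_i|\le\sum_j|m_j-m_{j+1}|\sum_{i>j}w_i=\sum_j T_{j+1}|m_j-m_{j+1}|.
\]
By Step 1, the right-hand side equals $\sum_i w_i|v_i-m_i|$. Combining this with the triangle inequality above gives $\TV(p,q)\le 2\sum_i w_i|v_i-m_i|$, which is the claim.

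\emph{Main obstacle.} The only delicate point is the degenerate case under the $0/0=0$ convention. If $T_{i+1}=0$, then $w_\ell=0$ for all $\ell>i$. In that case every term $T_{j+1}|m_j-m_{j+1}|$ with $j\ge i$ vanishes, and so does every term $w_i|m_i|$ with $T_i=0$. The Step 1 identity must be checked directly when $T_{i+1}=0$ but $w_i>0$. Then $T_i=w_i$ and $A_{i+1}=0$, so $m_i=w_iv_i/w_i=v_i$, and both sides of the identity are $0$. With this check, the whole argument is a clean summation-by-parts. I expect no substantive difficulty beyond spotting the identity in Step 1.
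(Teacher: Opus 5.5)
Your proof is correct. The paper does not prove this statement; it imports it as Lemma~8 of FJOPS. Your summation-by-parts argument therefore gives a self-contained proof where the paper only cites one.

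The key identity $w_i(v_i-m_i)=T_{i+1}(m_i-m_{i+1})$ is right. It expresses $\sum_i w_i|v_i-m_i|$ exactly as the weighted variation $\sum_j T_{j+1}|m_j-m_{j+1}|$. The same quantity dominates $\sum_i w_i|m_i|$ because $m_1=0$, and the triangle inequality $\sum_i w_i|v_i|\le\sum_i w_i|v_i-m_i|+\sum_i w_i|m_i|$ then finishes.

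Two minor remarks. First, the degenerate-case check is not needed. Under the $0/0=0$ convention, $A_j=T_jm_j$ holds for every $j$: if $T_j=0$, then every $w_\ell$ with $\ell\ge j$ vanishes, so $A_j=0$ as well. The Step~1 identity then follows by pure algebra from $A_i-A_{i+1}=w_iv_i$ and $T_i=w_i+T_{i+1}$.

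Second, your argument never uses the ordering $w_1\ge\cdots\ge w_n$, so it proves the inequality for an arbitrary ordering of the domain. In the paper, the ordering only matters downstream, for example in Proposition~\ref{prop:i-heaviest} and in reducing $\cI$ to a prefix in Lemma~\ref{lem:quad-tail}.
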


\section{The Algorithm}\label{sec:algorithm}

Our algorithm is described as Algorithm~\ref{alg:main}. 
It is very simple to describe: informally, it chooses a random set of a random size, and checks whether the two distributions are noticeably discrepant on the induced conditional distributions. 
This is independently repeated several times, and further wrapped in a loop which iterates over choices for the magnitude of the discrepancy. 
Crucially, note that this algorithm is non-adaptive, as all queries can be performed oblivious to the results of previous queries. 

We note that our algorithm is almost identical to \textsc{Anaconda}~\cite{KamathT19}. 
There are two main differences. 
Most significantly, we search for relative discrepancy (as described in Corollary~\ref{cor:test-discrep}), rather than absolute discrepancy. 
We also include a new outer loop which iterates over possibilities for the size of the discrepancy. 
Lower order differences with \textsc{Anaconda} include various parameter settings. 

The pseudocode uses the following parameter settings for the number of repetitions $T(t)$ and the number of queries $m(t)$. 
Note that both quantities depend on the candidate magnitude for the discrepancy $t$. 
\[
\begin{aligned}
T(t)&=O\!\left(\frac{t}{\varepsilon^2}\log\frac{n}{\varepsilon}\log\frac1\varepsilon\right),\\
m(t)&=O\!\left(\frac{\log\log(n/\varepsilon)+\log(1/\varepsilon)}{t}\right).
\end{aligned}
\]
Each discrepancy test has failure probability $\delta$, set small enough to bound failure across all tests by a constant. 
\[
\delta=O\!\left(\frac{\varepsilon^2}{\log(n/\varepsilon)\log(1/\varepsilon)}\right).
\]

\begin{algorithm}[H]
\caption{An algorithm for equivalence testing with non-adaptive conditional samples}
\label{alg:main}
\begin{algorithmic}[1]
\Function{\Test}{$n,\varepsilon,\NACOND_p,\NACOND_q$}
\For{each threshold $t=1,1/2,1/4,\ldots,\Theta\!\left(\varepsilon^2/(\log(n/\varepsilon)\log(1/\varepsilon))\right)$} \label{ln:thresholds}
	\For{$\ell=1,\ldots,T(t)$} \label{ln:repetitions}
	\State Choose an integer $j\in\{1,\ldots,O(\log(n/\varepsilon))\}$ uniformly at random, and define $r=2^j$. \label{ln:rand-prob}
    \State Choose a random set $S\subseteq[n]$, independently selecting each $i$ with probability $1/r$.
    \State Perform $m(t)$ queries to $\NACOND_p$ and $\NACOND_q$ on $S$.
    \State Use these samples to run the test of \cref{cor:test-discrep} with threshold $t$, failure prob.\ $\delta$.
	\If{the discrepancy test returns \texttt{diff}}
      \State \Return $\dtv(p,q)\geq\varepsilon$.
    \EndIf
  \EndFor
\EndFor
\State \Return $p=q$.
\EndFunction
\end{algorithmic}
\end{algorithm}

There are $O(\log\log(n/\varepsilon)+\log(1/\varepsilon))$ thresholds. The sample complexity of this algorithm is 
\begin{align*}
	\sum_t T(t) m(t) &= \sum_t O\left(\frac{t}{\varepsilon^2}\log\frac{n}{\varepsilon}\log\frac1\varepsilon\right)\cdot O\left( \frac{\log\log(n/\varepsilon)+\log(1/\varepsilon)}{t}\right) \\
	&= \sum_t  O\left(\frac{\log (n/\varepsilon) \log (1/\varepsilon) \log \log (n/\varepsilon) + \log (n/\varepsilon) \log^2 (1/\varepsilon)}{\varepsilon^2}\right) \\
	&=  O\left(\frac{\log (n/\varepsilon)\log (1/\varepsilon)\left(\log\log(n/\varepsilon)+\log(1/\varepsilon)\right)^2}{\varepsilon^2}\right) \\
	&=  \tilde O\left(\frac{\log n}{\varepsilon^2}\right).
\end{align*}

\section{Analysis of Algorithm~\ref{alg:main}}\label{sec:analysis}

For a query set $S \subseteq [n]$, we define $\F(S)$ as follows: 
\[
\F(S)=\max_{i\in S}
\frac{(p_S(i)-q_S(i))^2}
{(p_S(i)+q_S(i))(2-p_S(i)-q_S(i))}.
\]
This quantity identifies the maximum discrepancy (in the sense of Corollary~\ref{cor:test-discrep}) of any symbol in the conditional distribution corresponding to $S$. 
The following key lemma says that, over the randomness in sampling the various query sets $S$, Algorithm~\ref{alg:main} is likely to sample at least one set that realizes a large discrepancy. 

\begin{lemma}\label{lem:energy}
Suppose $\TV(p,q)\geq\varepsilon$. 
We let $S_{t,\ell}$ be the query set sampled by Algorithm~\ref{alg:main} at repetition $\ell$ with threshold $t$.
Then with probability at least $9/10$, there exists $t, \ell$ such that $\F(S_{t,\ell})\geq t$.
\end{lemma}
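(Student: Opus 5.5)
The plan has two stages: first show that a single random set $S$ (with random $r=2^j$) has large expected discrepancy, $\E[\F(S)] \gtrsim \varepsilon^2/\log(n/\varepsilon)$, up to $\log(1/\varepsilon)$ factors; then convert this expectation into a high-probability event using Levin's economical work-investment strategy, which is exactly what the threshold loop of Algorithm~\ref{alg:main} implements.

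\textbf{Expectation bound.} Order the symbols so that $w_1\ge\dots\ge w_n$, as in the preliminaries. For a fixed $r$, let $E_i$ be the event that $i$ is the highest-weight symbol in $S$. These events are disjoint, so
\[
\E[\F(S)] \ge \sum_i \Prob[E_i]\,\E[\F(S)\mid E_i].
\]
On $E_i$, the set $S$ consists of $i$ together with a random subset $R$ of the tail $\{i+1,\dots,n\}$, each tail element included with probability $1/r$. Because $\F(S)$ takes a maximum over symbols, it dominates the discrepancy of symbol $i$ itself. That discrepancy is a chi-squared-type function of $v_i$ compared against the aggregated $v$-value of $R$. The aggregated value of $R$ concentrates near $m_{i+1}$, with mass near $T_{i+1}/r$.

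When $r\approx T_i/w_i$, we have $\Prob[E_i]\approx 1/r$, since $i$ is included and few heavier elements appear. The contribution of $i$ should then be about $(1/r)\cdot(v_i-m_i)^2\cdot\Theta(1)$, because $w_i$ and the sampled tail mass are comparable. Fluctuations of the tail are handled by a separate argument, the analogue of the quadratic tail lemma: if $R$'s empirical $v$ deviates from $m_i$, then some tail symbol in $R$ is itself discrepant.

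Averaging over the $O(\log(n/\varepsilon))$ choices of $j$ gives roughly
\[
\E[\F(S)] \gtrsim \frac{1}{\log(n/\varepsilon)}\sum_i \frac{w_i}{T_i}\,(v_i-m_i)^2 .
\]
Symbols with $T_i$ tiny (less than $\varepsilon/10$ in total) can be discarded. Then Cauchy--Schwarz together with Lemma~\ref{bb:tail} ($\sum_i w_i|v_i-m_i|\ge\varepsilon/2$) gives the bound $\gtrsim \varepsilon^2/\log(n/\varepsilon)$. I expect this step to lose a $\log(1/\varepsilon)$ factor, for example from bucketing the $T_i$ scales.

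\textbf{Work investment.} Since $\F(S)\le 1$, we can bucket values dyadically. If $\E[\F]\ge\mu$, then some level $t=2^{-k}$, with $k\le\log(2/\mu)$, satisfies
\[
\Prob[\F(S)\ge t] \ge \frac{\mu}{2t\log(2/\mu)}.
\]
Levels below $\mu/2$ contribute at most $\mu/2$ in total, which justifies the stopping point of the threshold loop. At that level there are $T(t)=C\,t\log(2/\mu)/\mu$ independent trials, so the chance that all of them fail is at most $e^{-C/2}\le 1/10$. Plugging in $\mu=\Theta(\varepsilon^2/(\log(n/\varepsilon)\log(1/\varepsilon)))$ recovers exactly the stated $T(t)$ and the range of thresholds.

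\textbf{Main obstacle.} The hard part is the per-coordinate conditional bound: showing $\E[\F(S)\mid E_i]\gtrsim (v_i-m_i)^2$ when the random tail mass $T_{i+1}/r$ fluctuates, including tails dominated by a few heavy elements. My first attempt would be a case split. If $|v_i-m_i|$ is witnessed by symbol $i$ against the actual sampled tail, use $i$. Otherwise, the sampled tail's average $v$ differs from $m_i$ by a comparable amount, and a convexity (Jensen) argument shows that some tail element in $S$ has large relative discrepancy.
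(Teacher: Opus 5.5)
Your two-stage plan (expected discrepancy, then work investment) matches the paper's. However, the per-coordinate step fails as you set it up.

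\textbf{Where it fails.} At rate $1/r\approx w_i/T_i$ you do not get $\Prob[E_i]\approx 1/r$. The symbols heavier than $i$ each have weight at least $w_i$ and together carry mass $1-T_i$. So there can be up to $(1-T_i)/w_i$ of them, and the expected number landing in $S$ can be as large as $(1-T_i)/T_i$. Since you only discard $T_i<\varepsilon/10$, this is up to $\Theta(1/\varepsilon)$, making $\Prob[E_i]$ as small as $e^{-\Omega(1/\varepsilon)}/r$. The discrepancy can sit entirely on such symbols.

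In fact your intermediate bound $\E[\F(S)]\gtrsim\frac1L\sum_i\frac{w_i}{T_i}(v_i-m_i)^2$ is false. Take $k=(1-\varepsilon)M/(2\varepsilon)$ symbols with $p=q=2\varepsilon/M$. Follow them by $M$ symbols of weight $w=\varepsilon/M$ that are alternately supported only under $p$ or only under $q$, so that $\TV(p,q)=\varepsilon$.
\begin{itemize}
\item The light symbols with $T_i\in[\varepsilon/10,\varepsilon]$ have $(v_i-m_i)^2\ge 1/4$. They alone give $\sum_i\frac{w_i}{T_i}(v_i-m_i)^2=\Omega(1)$.
\item On the other hand, $\F(S)=0$ unless $S$ contains a light symbol, and then $\F(S)\le 4/(1+h)$, where $h$ is the number of heavy symbols in $S$.
\item Hence $\E[\F(S)\mid\rho]=O(\min\{1,M\rho\}\min\{1,1/(k\rho)\})$. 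This is $O(\varepsilon)$ and decays geometrically outside a window of $O(\log(1/\varepsilon))$ scales, so $\E_{j,S}[\F(S)]=O(\varepsilon\log(1/\varepsilon)/L)$. That is far below the $\Omega(1/L)$ your inequality predicts.
\end{itemize}

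\textbf{How the paper avoids this.} It samples at the rate of $i$'s own weight bucket, $\rho=2^{-j}\approx w_i$. Then $(i-1)\rho\le iw_i\le 1$, so $\Prob[A_i]\ge\rho/4$ unconditionally (Proposition~\ref{prop:i-heaviest}). The price is that the sampled tail has expected mass only $\rho T_i\approx w_iT_i$. The contribution is therefore $w_iT_i(v_i-m_i)^2$ rather than $\frac{w_i}{T_i}(v_i-m_i)^2$ (Lemma~\ref{lem:coordinate}). Cauchy--Schwarz against $\sum_{i\in\cI}w_i/T_i\le 1+\ln(16/\varepsilon)$ then recovers $\Omega(\varepsilon^2/\log(1/\varepsilon))$. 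This is where the logarithmic loss and the truncation to $\cI$ actually come from.

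The fluctuation issue, which you flag as the main obstacle and leave open, is handled without any case split or discrepant tail symbols. Only symbol $i$ is used, through the identity $p_S(i)-q_S(i)=2w_iY/(p(S)q(S))$ with $Y=\sum_{\ell\in S}w_\ell(v_i-v_\ell)$. This is linear in the selection, so $\E[Y\mid A_i,j]=\rho T_i(v_i-m_i)$ exactly. The pointwise bound $\F(S)\ge w_i^2Y^2/(2W^3U)$, combined with $\E[Y^2/(W^3U)]\ge(\E Y)^2/\E[W^3U]$ (all conditional on $A_i,j$), reduces everything to the moment bound $\E[W^3U\mid A_i,j]\le 159\rho^4T_i$ (Lemma~\ref{lem:moment}).

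\textbf{Work-investment step.} Your pigeonhole form needs $T(t)=Ct\log(2/\mu)/\mu$. That is a factor $\Theta(\log\log(n/\varepsilon)+\log(1/\varepsilon))$ more than Algorithm~\ref{alg:main} uses, so it does not recover the stated $T(t)$. To match the algorithm, bound the failure probability jointly over all thresholds by $\exp(-\sum_t T(t)\beta_t)$, with $T(t)=12t/\eta$, and use $\sum_t t\beta_t\ge\eta/2$.
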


With this lemma in hand, the proof of the upper bound in Theorem~\ref{thm:equiv-main} is not hard. 
If $p=q$, then for all the query sets $S$, we have that $p_S = q_S$. 
If $\dtv(p,q) \geq \varepsilon$, this lemma guarantees that we will identify a query set $S_{t,\ell}$ realizing discrepancy at least $t$ with probability at least $9/10$. 
Correctness follows from the guarantees of the discrepancy test in Corollary~\ref{cor:test-discrep} and a union bound over all $\sum_t T(t) = O\left(\frac{\log (n/\varepsilon) \log (1/\varepsilon)}{\varepsilon^2}\right)$ invocations.
The claimed sample complexity follows from the calculation in Section~\ref{sec:algorithm}.

It remains to prove Lemma~\ref{lem:energy}. 
As a first step, we convert the tail-discrepancy lemma of FJOPS~\cite{FalahatgarJOPS15} (Lemma~\ref{bb:tail}) to a quadratic variant which will be more convenient for downstream calculations (Lemma~\ref{lem:quad-tail} in Section~\ref{sec:quad-tail}). 
As our main technical lemma, we show that a randomly sampled set will have large discrepancy in expectation (Corollary~\ref{cor:expected-discrep} in Section~\ref{sec:structural}). 
An application of Levin's work-investment strategy (Section~\ref{sec:investment}) allows us to convert this expectation into getting a set having sufficient discrepancy with sufficient probability, concluding Lemma~\ref{lem:energy}. 

\subsection{A Quadratic Tail-Discrepancy Lemma}
\label{sec:quad-tail}

We show that the linear tail-discrepancy lemma of FJOPS~\cite{FalahatgarJOPS15} can be converted to a quadratic version, using Cauchy-Schwarz inequality. 
It turns out that these quadratic differences are better suited for the chi-square-like discrepancy we are trying to witness with Corollary~\ref{cor:test-discrep}.
At the same time, we exclude symbols $i$ which have extremely small weight $w_i$ under $p$ and $q$. 
Later analysis derives lower bounds on discrepancy exclusively from when a symbol is the heaviest element sampled and its weight is comparable to the sampling probability (see the definition of event $A_i$ and Lemma~\ref{lem:coordinate} in Section~\ref{sec:structural}).
Since our algorithm has a minimum sampling probability of $\Theta\left(\varepsilon/n\right)$, any item with less weight will never satisfy the latter criterion.
Consequently, to match the events we are able to extract discrepancy from, we aggregate over the corresponding set of symbols. 
At the same time, ignoring the remaining symbols loses only a small amount of the witnessed total variation distance.

\begin{lemma}\label{lem:quad-tail}
Define the following set of indices: 
\[
\cI=\{i: w_i \geq \varepsilon/16n,\ T_i\geq \varepsilon/16\}.
\]
If $\TV(p,q)\geq\varepsilon$, then
\begin{equation*}
\sum_{i\in\cI}w_iT_i(v_i-m_i)^2
	\geq\frac{\varepsilon^2}{16\left(1+\ln\frac{16}{\varepsilon}\right)}.
\end{equation*}
\end{lemma}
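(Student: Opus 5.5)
Starting from Lemma~\ref{bb:tail}, we have $\sum_i w_i|v_i-m_i|\geq \varepsilon/2$. First we discard the indices outside $\cI$, and then we apply Cauchy--Schwarz with weights $w_i/T_i$ to convert the linear sum into the quadratic one.

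\textbf{Step 1: discard indices outside $\cI$.} Note that $|v_i|\leq 1$ and $|m_i|\leq 1$, since $m_i$ is a $w$-weighted average of the $v_\ell$ in the tail. Hence each term satisfies $w_i|v_i-m_i|\leq 2w_i$.
\begin{itemize}
\item Indices with $w_i<\varepsilon/16n$ contribute in total at most $2\cdot n\cdot \varepsilon/16n=\varepsilon/8$.
\item Indices with $T_i<\varepsilon/16$ form a suffix of the sorted order, because $T_i$ is nonincreasing in $i$. Their total weight is therefore at most $T_{i_0}<\varepsilon/16$, where $i_0$ is the first such index, so they contribute at most $\varepsilon/8$.
\end{itemize}
It follows that $\sum_{i\in\cI}w_i|v_i-m_i|\geq \varepsilon/2-\varepsilon/4=\varepsilon/4$.

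\textbf{Step 2: Cauchy--Schwarz.} Write $w_i|v_i-m_i| = \sqrt{w_iT_i}\,|v_i-m_i|\cdot\sqrt{w_i/T_i}$. Cauchy--Schwarz then gives
\[
\frac{\varepsilon^2}{16}\leq \Big(\sum_{i\in\cI} w_iT_i(v_i-m_i)^2\Big)\Big(\sum_{i\in\cI}\frac{w_i}{T_i}\Big).
\]
It therefore suffices to show $\sum_{i\in\cI} w_i/T_i\leq 1+\ln(16/\varepsilon)$.

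\textbf{Step 3: bound the harmonic-type sum.} This is the only real computation. Since $w_i=T_i-T_{i+1}$, we have
\[
\frac{w_i}{T_i}=1-\frac{T_{i+1}}{T_i}\leq \ln\frac{T_i}{T_{i+1}}
\]
whenever $T_{i+1}>0$. The indices in $\cI$ all satisfy $T_i\geq\varepsilon/16$ and form an initial segment of the indices with $T_i\geq \varepsilon/16$ (after removing some terms, which only decreases the sum). So we bound the sum over all $i$ with $T_i\geq \varepsilon/16$.
\begin{itemize}
\item For every such $i$ except the last one, $i^*$, we have $T_{i+1}\geq\varepsilon/16$. These terms telescope to at most $\ln(T_1/T_{i^*})\leq \ln(16/\varepsilon)$, using $T_1=1$.
\item The last term satisfies $w_{i^*}/T_{i^*}\leq 1$.
\end{itemize}
Together these give $\sum_{i\in\cI}w_i/T_i\leq 1+\ln(16/\varepsilon)$. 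Dividing in Step 2 yields the claimed bound $\varepsilon^2/\bigl(16(1+\ln\frac{16}{\varepsilon})\bigr)$.

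The main care point is in Step 3: the logarithmic bound must be applied only where $T_{i+1}>0$ and $T_{i+1}\geq\varepsilon/16$, and the single boundary term must be handled separately. Steps 1 and 2 are routine.
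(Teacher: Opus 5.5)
Your proposal is correct and follows the paper's proof essentially step for step. You discard the low-weight and small-tail indices from Lemma~\ref{bb:tail} at a total cost of $\varepsilon/4$. You then bound $\sum_{i\in\cI} w_i/T_i \le 1+\ln(16/\varepsilon)$ by telescoping $1-T_{i+1}/T_i \le \ln(T_i/T_{i+1})$, handling the last term separately, and finish with Cauchy--Schwarz. The only cosmetic difference is that you bound the harmonic-type sum over the superset $\{i : T_i \ge \varepsilon/16\}$, whereas the paper observes directly that $\cI$ is an initial segment $[r]$.
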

\begin{proof}
	The coordinates with $w_i \leq \varepsilon/16n$ have total probability mass at most $n \cdot \varepsilon/16n = \varepsilon/16$. 
	Similarly, the coordinates with $T_i \leq \varepsilon/16$ have total mass at most $\varepsilon/16$.
	Observing that $v_i, m_i \in [-1,1]$ entrywise, we have that $|v_i - m_i| \leq 2$. 
	Thus, removing $[n] \setminus \cI$ from the summation in \cref{bb:tail} gives 
\begin{equation*}
\sum_{i\in\cI}w_i|v_i-m_i|
\geq\varepsilon/2-2(\varepsilon/16+\varepsilon/16)=\varepsilon/4.
\end{equation*}
Thus, 
\begin{equation}\label{eq:retained-l1}
	\left(\sum_{i\in\cI}w_i|v_i-m_i|\right)^2 \geq \varepsilon^2/16.
\end{equation}
We are left with $\cI = [r]$ for some $r \leq n$. 
	Since $w_i=T_i-T_{i+1}$,
	\begin{equation}\label{eq:telescoping}
\sum_{i=1}^r\frac{w_i}{T_i}
	= \sum_{i=1}^r\frac{T_i-T_{i+1}}{T_i} = \sum_{i=1}^r 1- \frac{T_{i+1}}{T_i}
\leq1+\sum_{i=1}^{r-1}\ln\frac{T_i}{T_{i+1}}
=1+\ln\prod_{i=1}^{r-1}\frac{T_i}{T_{i+1}}
=1+\ln\frac1{T_r}
	\leq1+\ln\frac{16}{\varepsilon}.
\end{equation}
The first inequality bounds the $i=r$ term by $1$, and the remaining terms using $1-x\leq - \ln x$. 

\noindent We factor the summand of the summation as follows: 
\[
w_i|v_i-m_i|=
\sqrt{w_iT_i(v_i-m_i)^2}\sqrt{w_i/T_i}.
\]
Cauchy-Schwarz says that 
\[
	\left(\sum_{i \in \cI} w_i|v_i-m_i|\right)^2\leq 
	\left(\sum_{i \in \cI} w_iT_i(v_i-m_i)^2\right)\left(\sum_{i \in \cI} w_i/T_i\right).
\]
Combining this with \eqref{eq:retained-l1} and \eqref{eq:telescoping} gives the desired statement. 
\end{proof}

\subsection{Expected Discrepancy under Random Restrictions}\label{sec:structural}

The goal of this section is to establish that the expected discrepancy $\E[\F(S)]$ is large, where the randomness is over the choice of the sampling probability and the selected elements of the query set $S$. 
In the following section, we demonstrate that our sampling strategy allows us to convert this large expectation into a reasonable probability of selecting a sufficiently discrepant set, establishing Lemma~\ref{lem:energy}.

We bucket the indices $i \in [n]$ based on their weight $w_i$. 
Specifically, we define 
\[ 
B_j=\{i:2^{-j}< w_i \leq 2^{-j+1}\}.
\]

We define $A_i$ to be the event where symbol $i$ is included in a query set $S$, and no earlier symbol $\ell < i$ is selected:
\[
A_i=\{i\in S,\ S\cap\{1,\ldots,i-1\}=\emptyset\}.
\]
The following proposition says that with a judicious choice of sampling probability, the probability that symbol $i$ is the heaviest selected symbol is comparable to the probability it is selected at all. 
\begin{proposition}
	\label{prop:i-heaviest}
	Fix $i \in [n]$, and let $j$ be the index of the bucket $B_j$ that $i$ belongs to.
	Let $\rho = 2^{-j}$ be a sampling probability, and let $S \subseteq [n]$ be a random set where every symbol in $[n]$ is independently included with probability $\rho$. 
	Then $\Prob_S(A_i \mid j) \geq \rho/4$. 
\end{proposition}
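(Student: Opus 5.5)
By independence, $\Prob_S(A_i\mid j)=\rho\,(1-\rho)^{i-1}$. It therefore suffices to show $(1-\rho)^{i-1}\ge 1/4$, i.e., that the number of symbols preceding $i$ is small compared to $1/\rho = 2^j$.

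\emph{Counting the predecessors.} Since the symbols are sorted so that $w_1\ge\dots\ge w_n$, every $\ell<i$ satisfies $w_\ell\ge w_i>2^{-j}$, because $i\in B_j$. The $w_\ell$ sum to at most $1$, since $w$ is a probability distribution. Hence
\[
(i-1)\cdot 2^{-j} < \sum_{\ell<i} w_\ell \leq 1 ,
\]
which gives $i-1< 2^{j}=1/\rho$, and so $(i-1)\rho<1$. (If $i=1$ the claim is immediate, since the probability equals $\rho$.)

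\emph{Lower-bounding the survival probability.} We need $(1-\rho)^{k}\ge 1/4$ for an integer $0\le k<1/\rho$. Use the elementary inequality $1-x\ge 4^{-x}$ for $x\in[0,1/2]$. It holds because $x\mapsto 1-x-4^{-x}$ is concave and vanishes at $x=0$ and at $x=1/2$. Then
\[
(1-\rho)^k \ge 4^{-\rho k} > 4^{-1} .
\]
This inequality requires $\rho\le 1/2$, i.e., $j\ge 1$. Since $w_i\le 1$, every symbol with $w_i>0$ lies in some $B_j$ with $j\ge1$, so this condition always holds. Hence $\Prob_S(A_i\mid j)\ge \rho/4$.

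The only step needing any care is this final numerical inequality. The crude bound $1-x\ge e^{-2x}$ (valid for $x\le 1/2$) would give only $e^{-2}$, which falls short of $1/4$. The sharper base-$4$ bound, checked by concavity, suffices. There is no real obstacle beyond this.
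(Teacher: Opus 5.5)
Your proof is correct and follows the paper's argument. Both write $\Prob_S(A_i\mid j)=\rho(1-\rho)^{i-1}$, use the sorted order together with $w_i>\rho$ to get $i-1\le 1/\rho$, and finish with $(1-\rho)^{1/\rho}\ge 1/4$ for $\rho\le 1/2$, which is equivalent to your $1-x\ge 4^{-x}$; your concavity check and your observation that $w_i\le 1$ forces $j\ge 1$ (so $\rho\le 1/2$) make explicit two points the paper leaves implicit.
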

\begin{proof}
	First, we have that $A_i$ occurs precisely when symbol $i$ is selected and none of the previous $i-1$ symbols are selected: $\Prob_S(A_i \mid j) = \rho (1-\rho)^{i-1}$.
	Observe that $(i-1)w_i \leq iw_i \leq 1$: the latter inequality holds because the $w_i$'s are in decreasing order and form a probability distribution. 
	If this weren't the case, this prefix would contain more than $1$ mass. 
	Note further that $\rho \leq w_i$, based on the definition of our bucketing scheme. 
	Combining these inequalities, we have that $i-1 \leq 1/\rho$, and thus $\rho (1-\rho)^{i-1} \geq \rho (1-\rho)^{1/\rho} \geq \rho/4$ as desired, where the latter inequality holds for $\rho \leq 1/2$. 
\end{proof}




The following lemma captures the expectation of $\F(S)$ extracted when the event $A_i$ occurs: 

\begin{lemma}\label{lem:coordinate}
	Fix $i \in [n]$, and let $j$ be the index of the bucket $B_j$ that $i$ belongs to.
	Let $\rho = 2^{-j}$ be a sampling probability, and let $S \subseteq [n]$ be a random set where every symbol in $[n]$ is independently included with probability $\rho$. 
	Then 
\begin{equation*}\label{eq:coordinate}
\E_S[\F(S)\ind_{A_i} \mid j ]
\geq\frac{w_iT_i(v_i-m_i)^2}{2048},
\end{equation*}
where the randomness is only over the choice of the random set $S$. 
\end{lemma}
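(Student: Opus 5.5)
The plan is to condition on $A_i$ and reduce $\F(S)$ to a two-outcome computation, namely "symbol $i$" versus "the rest of $S$". On $A_i$ we can write $S=\{i\}\cup R$, where $R\subseteq\{i+1,\ldots,n\}$ contains each symbol independently with probability $\rho$. Since $A_i$ depends only on symbols $1,\ldots,i$, the set $R$ is independent of $A_i$. Hence
\[
\E_S[\F(S)\ind_{A_i}\mid j]=\Prob(A_i\mid j)\,\E_R[\F(\{i\}\cup R)]\ge \tfrac{\rho}{4}\,\E_R[\F(\{i\}\cup R)]
\]
by \cref{prop:i-heaviest}. Because $w_i\le 2\rho$, it then suffices to show $\E_R[\F(\{i\}\cup R)]\ge c\,T_i(v_i-m_i)^2$ for an absolute constant $c$ (the target $1/2048$ leaves room for $c=1/256$). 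We lower bound $\F(\{i\}\cup R)$ by the discrepancy of the single coordinate $i$.

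\emph{Step 1 (algebra).} Introduce $W=\sum_{\ell\in R}w_\ell$ and $\Delta=\sum_{\ell\in R}w_\ell v_\ell$. Then $p(R)=W+\Delta$, $q(R)=W-\Delta$, $p(i)=w_i(1+v_i)$ and $q(i)=w_i(1-v_i)$. Write $a=p_S(i)$ and $b=q_S(i)$. A direct computation gives
\[
a-b=\frac{p(i)q(R)-q(i)p(R)}{(p(i)+p(R))(q(i)+q(R))}=\frac{2w_i(v_iW-\Delta)}{(p(i)+p(R))(q(i)+q(R))}.
\]
Similarly, $2-a-b=\frac{p(R)}{p(i)+p(R)}+\frac{q(R)}{q(i)+q(R)}$, and $a+b\le 2$. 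Bounding each of these factors crudely (all denominators are at most $2(w_i+W)$, and the numerators are comparable to $w_i$ and $W$), I expect to obtain
\[
\F(S)\ \ge\ c_1\,\frac{(v_iW-\Delta)^2}{W(w_i+W)}.
\]
A degenerate case needs care: when one of $p(i),q(i)$ is very small (that is, $|v_i|$ near $1$), the two conditional probabilities have very different denominators. Here I would use $(p(i)+p(R))(q(i)+q(R))\le 4(w_i+W)^2$ together with $(a+b)(2-a-b)\le 2\,(2-a-b)$, and then bound $2-a-b$ above by $O(W/(w_i+W)\cdot(w_i+W)/w_i)$. Getting this inequality with an absolute constant, uniformly over $v_i\in[-1,1]$, is the step I expect to be the main obstacle. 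If the clean form fails, I will settle for any bound of the form $x^2/h$ with $h$ having controllable expectation, since that is all Step 2 uses.

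\emph{Step 2 (expectation via Cauchy--Schwarz).} Let $x=v_iW-\Delta$ and $h=W(w_i+W)$. Cauchy--Schwarz gives $\E[x^2/h]\ge (\E x)^2/\E h$, which avoids any joint-convexity requirement. The numerator is computed exactly:
\[
\E x=\rho\Big(v_iT_{i+1}-\sum_{\ell>i}w_\ell v_\ell\Big)=\rho\big(v_iT_{i+1}-(T_im_i-w_iv_i)\big)=\rho T_i(v_i-m_i).
\]
For the denominator we have
\[
\E h=w_i\rho T_{i+1}+(\rho T_{i+1})^2+\rho(1-\rho)\sum_{\ell>i}w_\ell^2\le w_i\rho T_i+\rho^2T_i+\rho w_iT_i\le 3w_i\rho T_i,
\]
using $\rho\le w_i$, $T_{i+1}\le T_i\le 1$, and $w_\ell\le w_i$ for $\ell>i$. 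Therefore
\[
\E_R[\F(S)]\ge c_1\frac{\rho^2T_i^2(v_i-m_i)^2}{3w_i\rho T_i}\ge \frac{c_1}{6}\,T_i(v_i-m_i)^2,
\]
where the last step uses $w_i\le 2\rho$. Multiplying by $\rho/4\ge w_i/8$ yields a bound of the form $c\,w_iT_i(v_i-m_i)^2$. The constant then only needs to be tracked to reach $1/2048$.
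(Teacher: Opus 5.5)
Your overall plan matches the paper's proof: condition on $A_i$ (with $R$ independent of it), lower bound the discrepancy of coordinate $i$ pointwise by $Y^2/h$ with $Y=v_iW-\Delta$, apply Cauchy--Schwarz using $\E[Y]=\rho T_i(v_i-m_i)$, bound $\E[h]$ by moments of $W$, and multiply by $\Prob(A_i\mid j)\ge\rho/4$. Your $W$ is the paper's $U$. The gap is Step~1. The inequality $\F(S)\ge c_1(v_iW-\Delta)^2/(W(w_i+W))$ is false for every absolute $c_1>0$. The obstruction is not $|v_i|\to1$ but the generic situation where the rest of $S$outweighs $w_i$. Let $R$ consist of $K$ symbols of weight $w_i$ with $v_\ell=0$, so $W=Kw_i$ and $\Delta=0$. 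Coordinate $i$ then has discrepancy exactly $\frac{v_i^2K}{(K+1-v_i^2)(K+1)}\le v_i^2/(K+1)$, and each $\ell\in R$ has discrepancy $O(v_i^2/K^3)$. Your claimed lower bound, however, is $c_1v_i^2K/(K+1)\ge c_1v_i^2/2$, which fails once $K\gg 1/c_1$. The factor you lost is $w_i/(w_i+W)$. It lives in $a+b$, which you bounded by $2$; with the crude bounds you list, one only recovers the paper's weaker $\frac{w_i^2Y^2}{2W(w_i+W)^3}$.

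The repair uses the exact identity
\[
\frac{(a-b)^2}{(a+b)(2-a-b)}=\frac{4w_i^2Y^2}{\bigl(p(i)q(S)+q(i)p(S)\bigr)\bigl(p(R)q(S)+q(R)p(S)\bigr)}\ge\frac{w_iY^2}{4W(w_i+W)^2}.
\]
This follows from $p(i)q(S)+q(i)p(S)\le4w_i(w_i+W)$ and $p(R)q(S)+q(R)p(S)\le4W(w_i+W)$. It holds uniformly in $v_i$, so there is no degenerate case, and the example above shows it is tight up to constants. Step~2 must then control $h=W(w_i+W)^2$, which requires the \emph{third} moment of $W$; your second-moment computation of $\E[W(w_i+W)]$ is not enough. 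Using $w_\ell\le w_i\le2\rho$, the moment bounds in Appendix~\ref{app:moment} give $\E[W(w_i+W)^2]\le(4+12+11)\rho^3T_i=27\rho^3T_i$. Hence $\E_R[\F(S)]\ge\frac{w_i}{108\rho}T_i(v_i-m_i)^2$, and multiplying by $\rho/4$ gives $\frac{w_i}{432}T_i(v_i-m_i)^2$. The extra power of $W$ in $h$ is paid for by the extra factor $w_i\ge\rho$ in the numerator. The paper instead keeps the cruder bound $\frac{w_i^2Y^2}{2W(w_i+W)^3}$ and pays for it with the fourth-moment estimate of Lemma~\ref{lem:moment}.
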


With this lemma in hand, we demonstrate how it allows us to conclude that the overall discrepancy we expect to capture is large.
The proof essentially aggregates over the $A_i$ events (which are disjoint), applies Lemma~\ref{lem:coordinate} to each, and then applies our quadratic tail-discrepancy lemma (Lemma~\ref{lem:quad-tail}) to their sum. 
We instantiate $\Theta(\log(n/\varepsilon))$ in Line~\ref{ln:rand-prob} of Algorithm~\ref{alg:main} as $L := 1 + \log_2\frac{16n}{\varepsilon}$.
\begin{corollary}
	\label{cor:expected-discrep}
Suppose $\TV(p,q)\geq\varepsilon$. Then
\[
	\E_{j,S}[\F(S)]\geq\frac{\varepsilon^2}{2^{15}\left(1+\log_2\frac{16n}{\varepsilon}\right)\left(1+\ln\frac{16}{\varepsilon}\right)},
\]
where the randomness is over the choice of the sampling probability and the selected elements of the query set $S$. 
\end{corollary}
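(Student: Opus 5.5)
The plan is to average over the random choice of $j$, restrict attention to indices in $\cI$, and use disjointness of the events $A_i$.

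First, since $\F(S)\geq 0$ and the events $A_1,\dots,A_n$ are pairwise disjoint (each says that $i$ is the smallest-index element of $S$), we have for every fixed $j$
\[
\F(S)\geq \sum_{i\in\cI}\F(S)\ind_{A_i}, \qquad\text{hence}\qquad \E_{S}[\F(S)\mid j]\geq \sum_{i\in\cI}\E_S[\F(S)\ind_{A_i}\mid j].
\]

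Next we average over $j$. The algorithm picks $j$ uniformly from $\{1,\dots,L\}$ with $L=1+\log_2\frac{16n}{\varepsilon}$ and sets $\rho=2^{-j}$. For each $i\in\cI$, let $j(i)$ be the bucket index with $i\in B_{j(i)}$, so $2^{-j(i)}<w_i\leq 2^{-j(i)+1}$.

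We must check that $j(i)\in\{1,\dots,L\}$. Since $w_i\leq 1$, we get $j(i)\geq 1$. Since $w_i\geq\varepsilon/16n$, we get $2^{-j(i)+1}\geq \varepsilon/16n$, so $j(i)\leq 1+\log_2\frac{16n}{\varepsilon}=L$. If $L$ is not an integer, we take the range up to $\lceil L\rceil$; this only changes constants, which I would absorb or handle by rounding.

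For indices with $j(i)=1$ we have $\rho=1/2$, so the condition $\rho\le 1/2$ in Proposition~\ref{prop:i-heaviest} still holds. Lemma~\ref{lem:coordinate} is stated exactly for $\rho=2^{-j(i)}$, so no further check is needed there.

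Each value $j(i)$ is selected with probability $1/L$. Dropping all the other (nonnegative) values of $j$ for index $i$ gives
\[
\E_{j,S}[\F(S)]\geq \frac1L\sum_{i\in\cI}\E_S[\F(S)\ind_{A_i}\mid j=j(i)]\geq \frac{1}{2048L}\sum_{i\in\cI}w_iT_i(v_i-m_i)^2 .
\]
This step is legitimate because for each $i$ we only use the single value $j=j(i)$, and for that fixed $j$ the disjointness bound from the first step applies to the set of $i$ with $j(i)=j$.

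More carefully, the bound is assembled as follows:
\[
\E_{j,S}[\F(S)]=\frac1L\sum_{j}\E_S[\F(S)\mid j]\geq\frac1L\sum_j\sum_{i\in\cI:\,j(i)=j}\E_S[\F(S)\ind_{A_i}\mid j].
\]
Each $i\in\cI$ appears exactly once in this double sum, namely under $j=j(i)$.

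Finally, apply Lemma~\ref{lem:quad-tail}, which bounds the last sum below by $\varepsilon^2/\bigl(16(1+\ln\frac{16}{\varepsilon})\bigr)$. Since $2048\cdot 16=2^{15}$, this yields exactly the stated bound.

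The only delicate point is the bookkeeping: each $i\in\cI$ must be charged to exactly one value of $j$, and that value must lie in the algorithm's range. This is precisely what the constraint $w_i\geq\varepsilon/16n$ in the definition of $\cI$ is designed to guarantee.
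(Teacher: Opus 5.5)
Your proposal is correct and follows essentially the same route as the paper. Both condition on $j$, use disjointness of the $A_i$ to lower bound $\E_S[\F(S)\mid j]$ by a sum over the bucket, apply Lemma~\ref{lem:coordinate}, and finish with Lemma~\ref{lem:quad-tail}. Your explicit check that $1\le j(i)\le L$ for $i\in\cI$ is exactly the paper's unproved observation $\cI\subseteq\bigcup_{j=1}^L B_j$.

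On the non-integer $L$ point, round down rather than up. Since $j(i)$ is an integer with $j(i)\le L$, taking $j$ uniform on $\{1,\dots,\lfloor L\rfloor\}$ keeps every $i\in\cI$ in range. This gives the factor $1/\lfloor L\rfloor\ge 1/L$, so the stated constant holds exactly rather than only up to absorbed constants.
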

\begin{proof}
We first use the law of total expectation over the random choice of $j$:
\[
\E_{j,S}[\F(S)]
=\frac1L\sum_{j=1}^{L}\E_S[\F(S)\mid j].
\]
Noting that the $A_i$'s are disjoint events, we have that 
\[
	\frac1L\sum_{j=1}^{L}\E_S[\F(S)\mid j] \geq \frac1L\sum_{j=1}^{L}\sum_{i \in B_j} \E_S[\F(S) \ind_{A_i}\mid j].
\]
Applying Lemma~\ref{lem:coordinate} gives 
\[
\frac1L\sum_{j=1}^{L}\sum_{i \in B_j} \E_S[\F(S) \ind_{A_i}\mid j]
	\geq\frac1{2048L}\sum_{j=1}^{L}\sum_{i \in B_j}w_iT_i(v_i-m_i)^2.
\]
Observing that $\cI \subseteq \cup_{j=1}^L B_j$,
\[
\frac1{2048L}\sum_{j=1}^{L}\sum_{i \in B_j}w_iT_i(v_i-m_i)^2 \geq 
\frac1{2048L}\sum_{i\in\cI}w_iT_i(v_i-m_i)^2.
\]
Applying Lemma~\ref{lem:quad-tail} and the assumption that $\dtv(p,q) \geq \varepsilon$, we can conclude:
\[
\E_{j,S}[\F(S)]
	\geq \frac{1}{2^{11}L} \cdot \frac{\varepsilon^2}{2^{4}\left(1+\ln\frac{16}{\varepsilon}\right)} =  \frac{\varepsilon^2}{2^{15}\left(1 + \log_2\frac{16n}{\varepsilon}\right)\left(1+\ln\frac{16}{\varepsilon}\right)}.
\]
\end{proof}

\noindent We proceed to prove Lemma~\ref{lem:coordinate}. 

\begin{proof}[Proof of \cref{lem:coordinate}]
	Fix coordinate $i \in [n]$, the index $j$ of the bucket $B_j$ it belongs to, and the sampling probability $\rho=2^{-j}$. 
	We will condition on the event $A_i$, where $i$ is the first symbol selected and every subsequent symbol is selected with probability $\rho$, and the choice of the index $j$. 
	Define the following quantities:
\begin{equation*}
U=\sum_{\ell\in S\setminus\{i\}}w_\ell,\qquad
W=w_i+U,\qquad
Y=\sum_{\ell\in S}w_\ell(v_i-v_\ell).
\end{equation*}
$U$ and $W$ are the total weight of the set $S$, without and with the contribution of symbol $i$. 
$Y$ captures how the relative difference between $p$ and $q$ at symbol $i$ compares to their relative difference (on average) across the rest of the set $S$.

We compute the (conditional) expectation $Y$: 
\begin{equation}\label{eq:EY}
\E_S[Y\mid A_i,j]=\rho\sum_{\ell\geq i}w_\ell(v_i-v_\ell)
=\rho T_i(v_i-m_i),
\end{equation}
	where we recall the definitions $T_i = \sum_{\ell=i}^n w_\ell$ and $m_i = \frac{\sum_{\ell=i}^n w_\ell v_\ell}{T_i}$. 

We establish the pointwise bound
\begin{equation}\label{eq:pointwise}
\F(S)\geq\frac{w_i^2Y^2}{2W^3U}.
\end{equation}
First, we deal with degenerate cases: if $U=0$, $p(S)=0$, or $q(S) = 0$, it is not hard to see that $Y=0$ and the bound holds vacuously.
Therefore, we assume that $p(S), q(S),$ and $U > 0$. 

We inspect the following lower bound for $\F(S)$, examining the contribution at symbol $i$. 
\[
\F(S)\geq\frac{(p_S(i)-q_S(i))^2}{(p_S(i)+q_S(i))(2-p_S(i)-q_S(i))}
\]
We first derive an alternative form for the numerator. 
We write
\[
p_S(i)-q_S(i)=\frac{p(i)q(S)-q(i)p(S)}{p(S)q(S)}.
\]
	Using $p(\ell) = \frac{p(\ell) + q(\ell)}{2}\cdot \frac{(p(\ell) + q(\ell)) + (p(\ell) - q(\ell))}{p(\ell) + q(\ell)}=w_\ell(1+v_\ell)$ and $q(\ell)=w_\ell(1-v_\ell)$, we expand the numerator as
\[
\begin{aligned}
p(i)q(S)-q(i)p(S)
&=\sum_{\ell\in S}\bigl(p(i)q(\ell)-q(i)p(\ell)\bigr)\\
&=w_i\sum_{\ell\in S}w_\ell
\bigl((1+v_i)(1-v_\ell)-(1-v_i)(1+v_\ell)\bigr)\\
&=2w_i\sum_{\ell\in S}w_\ell(v_i-v_\ell)
=2w_iY.
\end{aligned}
\]
Therefore,
\[
	(p_S(i)-q_S(i))^2
	=\frac{4w_i^2Y^2}{(p(S)q(S))^2}.
\]
To deal with the second term in the denominator, write each complement probability in terms of its remaining mass:
\[
\begin{aligned}
2-p_S(i)-q_S(i)
&=(1-p_S(i))+(1-q_S(i))\\
&=\frac{p(S\setminus\{i\})}{p(S)}
+\frac{q(S\setminus\{i\})}{q(S)}\\
&=\frac{p(S\setminus\{i\})q(S)+q(S\setminus\{i\})p(S)}{p(S)q(S)}.
\end{aligned}
\]
Also $p(S)+q(S)=2W$ and $p(S\setminus\{i\})+q(S\setminus\{i\})=2U$, so
\[
p(S)q(S) \leq \frac14\left(p(S)+q(S)\right)^2= W^2
\]
and
\[
p(S\setminus\{i\})q(S)+q(S\setminus\{i\})p(S) \leq (p(S)+q(S))(p(S\setminus\{i\})+q(S\setminus\{i\})) =4UW.
\]
As a last observation, $p_S(i)+q_S(i)\leq2$.
Putting this all together, 
\[
\begin{aligned}
\F(S)
&\geq\frac{(p_S(i)-q_S(i))^2}{(p_S(i)+q_S(i))(2-p_S(i)-q_S(i))}\\
	&\geq \frac{4w_i^2Y^2 \cdot p(S)q(S)}{2\cdot (p(S)q(S))^2\bigl(p(S\setminus\{i\})q(S)+q(S\setminus\{i\})p(S)\bigr)}\\
&\geq\frac{w_i^2Y^2}{2W^3U},
\end{aligned}
\]
which proves \eqref{eq:pointwise} in this case.

We use the following technical lemma to bound the expectation of the quantity in the denominator, whose proof appears in Appendix~\ref{app:moment}. 
\begin{restatable}{lemma}{momentbound}\label{lem:moment}
Let $0<\rho\leq1/2$, let $0<a\leq2\rho$, and let $b_1,\ldots,b_s\in[0,2\rho]$ satisfy
\[
T:=a+\sum_{\ell=1}^s b_\ell\leq1.
\]
Let $\xi_1,\ldots,\xi_s$ be independent $\Ber(\rho)$ variables, and set $U=\sum_\ell b_\ell\xi_\ell$. Then
\begin{equation*}\label{eq:moment}
\E[(a+U)^3U]\leq159\rho^4T.
\end{equation*}
\end{restatable}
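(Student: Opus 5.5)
We bound $\E[(a+U)^3U]$ by expanding the cube, reducing to moments of $U$ and using that $U$ is a weighted sum of independent Bernoullis with small weights. Write
\[
(a+U)^3U = a^3U + 3a^2U^2 + 3aU^3 + U^4,
\]
so we must bound $\E[U^k]$ for $k=1,\dots,4$. Set $\mu := \E[U] = \rho\sum_\ell b_\ell \leq \rho T$. The key claim is that for each $k\in\{1,2,3,4\}$ there is an absolute constant $C_k$ with
\[
\E[U^k]\leq C_k\bigl(\rho^{k-1}\mu + \mu^k\bigr).
\]
Since $\mu\leq\rho T\leq\rho$, we have $\mu^k\leq \rho^{k-1}\mu$, so in fact $\E[U^k]\leq 2C_k\rho^{k-1}\mu\leq 2C_k\rho^k T$.

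\textbf{Moment bound.} Expand $U^k=\sum b_{\ell_1}\cdots b_{\ell_k}\xi_{\ell_1}\cdots\xi_{\ell_k}$ and group the terms by the partition of $\{1,\dots,k\}$ induced by which indices coincide. Consider a partition with blocks of sizes $s_1,\dots,s_r$. Since $\xi^s=\xi$, the grouped contribution is at most
\[
\prod_{m=1}^r \Bigl(\rho\sum_\ell b_\ell^{s_m}\Bigr).
\]
Using $b_\ell\leq 2\rho$ gives $\sum_\ell b_\ell^{s}\leq (2\rho)^{s-1}\sum_\ell b_\ell$, so each block contributes at most $(2\rho)^{s_m-1}\mu$. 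The whole partition then contributes at most
\[
2^{k-r}\rho^{k-r}\mu^r \leq 2^{k}\rho^{k-1}\mu,
\]
where the last step uses $\mu\leq\rho$. Summing over all set partitions of $\{1,\dots,k\}$ (the Bell numbers $1,2,5,15$) yields $\E[U^k]\leq \mathrm{Bell}(k)\,2^k\rho^{k-1}\mu$. Substituting $\mu\leq\rho T$ gives
\[
\E[U]\leq 2\rho T,\quad \E[U^2]\leq 8\rho^2T,\quad \E[U^3]\leq 40\rho^3T,\quad \E[U^4]\leq 240\rho^4T.
\]

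\textbf{Assembling.} Use $a\leq 2\rho$ together with the moment bounds $\E[U^k]\leq c_k\rho^k T$ from the previous step:
\[
\E[(a+U)^3U] \leq (2\rho)^3c_1\rho T + 3(2\rho)^2c_2\rho^2T + 3(2\rho)c_3\rho^3T + c_4\rho^4T.
\]
This is an absolute constant times $\rho^4T$, which is the desired form.

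\textbf{Main obstacle.} The only delicate point is getting the specific constant $159$. The crude partition counting above gives a larger constant, roughly $16+96+240+240$. To reach $159$ I would tighten the accounting in two ways. First, use the exact expansion in terms of the quantities $\rho\sum_\ell b_\ell^s$ and keep the factors $(1-\rho)$ and $\rho^{s}\leq\rho$. Second, rather than applying $\mu\leq\rho T$ separately to each term, use $a+\sum_\ell b_\ell = T$ jointly. Concretely, a cleaner route is to bound
\[
(a+U)^3 \leq \bigl(a+U\bigr)^2\cdot\max(a+U),
\]
or to condition on each $\xi_\ell$ in turn: write
\[
\E[(a+U)^3U]=\sum_\ell b_\ell\rho\,\E\bigl[(a+b_\ell+U_{-\ell})^3\bigr],
\]
and bound $\E[(c+U_{-\ell})^3]$ with $c\leq 4\rho$ by an analogous moment computation. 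This yields $\rho\sum_\ell b_\ell\leq\rho T$ times $O(\rho^3)$. Any absolute constant suffices for the downstream use in \cref{lem:coordinate} up to the numeric factor $2048$, so if matching $159$ exactly proves fiddly, I would track the constant through this conditioning form, which is naturally tighter.
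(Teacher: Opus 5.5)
Your approach is the same as the paper's. You expand $(a+U)^3U$ using $a\le 2\rho$, and you bound $\E[U^k]$ for $k\le 4$ by grouping index tuples according to which indices coincide, using $\xi^s=\xi$, $b_\ell\le 2\rho$ and $\sum_\ell b_\ell\le T\le 1$. As written, however, it only gives $\E[(a+U)^3U]\le(16+96+240+240)\rho^4T=592\rho^4T$, so the stated constant $159$ is not established. The tightenings you sketch are not carried out. The one you single out as naturally tighter, conditioning on $\xi_\ell$, does not close the gap on its own: combined with your moment bounds it gives $\rho^4T(64+48\cdot 2+12\cdot 8+40)=296\rho^4T$. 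The constant is part of the statement, and the paper uses $159\le 256$ to get the factor $2048$ in \cref{lem:coordinate}, so this is not purely cosmetic.

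All of the slack is in the step $(2\rho)^{k-r}\mu^r\le 2^k\rho^{k-1}\mu$, followed by multiplying by the Bell number.
\begin{itemize}
\item Let $\tau:=\sum_\ell b_\ell\le T\le 1$, so $\mu=\rho\tau$.
\item A partition with $r$ blocks then contributes at most $(2\rho)^{k-r}(\rho\tau)^r=2^{k-r}\rho^k\tau^r\le 2^{k-r}\rho^k\tau$. Keep the factor $2^{k-r}$ rather than bounding it by $2^k$.
\item Summing $2^{k-|\pi|}$ over set partitions $\pi$ of $\{1,\dots,k\}$ gives $1$, $2+1=3$, $4+3\cdot 2+1=11$ and $8+7\cdot 4+6\cdot 2+1=49$ for $k=1,2,3,4$.
\item Hence $\E[U^k]\le c_k\rho^k\tau$ with $(c_1,c_2,c_3,c_4)=(1,3,11,49)$, and
\[
\E[(a+U)^3U]\le\rho^4\tau\,(8\cdot 1+12\cdot 3+6\cdot 11+49)=159\rho^4\tau\le 159\rho^4T.
\]
\end{itemize}
This is exactly the paper's argument, which it carries out after rescaling to $U'=U/\rho$. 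Its quantities $s_r=\sum_\ell\rho(b_\ell/\rho)^r\le 2^{r-1}\tau$ are precisely your per-block bounds. With these sharp moments, your conditioning route would also give $\rho^4T(64+48+36+11)=159\rho^4T$, but it is not needed.
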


We apply Lemma~\ref{lem:moment} with $a=w_i$ and the vector $b$ being the vector $w_\ell$ for $\ell > i$.
It is not hard to verify the lemma conditions using the definition and choice of $\rho$, the fact that $w_\ell$ is non-increasing, and $T$ corresponds to $T_i \leq 1$. 
This gives 
\begin{equation}\label{eq:moment-used}
\E_S[W^3U\mid A_i,j]\leq 159\rho^4T_i \leq 256 \rho^4 T_i.
\end{equation}

We let $Z = W^3 U$, and decompose $Y = \frac{Y}{\sqrt{Z}} \sqrt{Z}$ for the sake of applying Cauchy-Schwarz. 
Specifically: 
\[
\begin{aligned}
(\E_S[Y\mid A_i,j])^2
&=\left(\E_S\!\left[\frac{Y}{\sqrt Z}\sqrt Z\,\middle|\,A_i,j\right]\right)^2\\
&\leq\E_S[Y^2/Z\mid A_i,j]\E_S[Z\mid A_i,j].
\end{aligned}
\]
Returning to the pointwise bound \eqref{eq:pointwise} and taking (conditional) expectation of both sides: 
\begin{equation*}\label{eq:conditional-F}
\begin{aligned}
\E_S[\F(S)\mid A_i,j]
&\geq\frac{w_i^2}{2}\E_S[Y^2/Z\mid A_i,j]\\
&\geq\frac{w_i^2(\E_S[Y\mid A_i,j])^2}{2\E_S[Z\mid A_i,j]}\\
&\geq\frac{w_i^2}{512\rho^2}\,T_i(v_i-m_i)^2.
\end{aligned}
\end{equation*}
The last inequality uses \eqref{eq:EY} and \eqref{eq:moment-used}.
Finally, use $\Prob_S(A_i\mid j)\geq\rho/4$ from \cref{prop:i-heaviest} and $w_i\geq\rho$:
\[
\E_S[\F(S)\ind_{A_i}\mid j]
=\Prob_S(A_i\mid j)\E_S[\F(S)\mid A_i,j]
\geq\frac{w_i^2}{2048\rho}T_i(v_i-m_i)^2
\geq\frac{w_i}{2048}T_i(v_i-m_i)^2.
\]
\end{proof}

\subsection{Converting Expectation to Probability with Work Investment}\label{sec:investment}

Corollary~\ref{cor:expected-discrep} establishes a lower bound on the expected discrepancy of a random $S$. 
This is sufficient to conclude a (sub-optimal) algorithm for testing equivalence.
We let $\eta=\frac{\varepsilon^2}{2^{16}\left(1 + \log_2 \frac{16n}{\varepsilon}\right)\left(1+\ln\frac{16}{\varepsilon}\right)} = \tilde \Theta\left(\frac{\varepsilon^2}{\log n}\right)$.
Corollary~\ref{cor:expected-discrep} states that $\E[\F(S)] \geq 2\eta$. 
A reverse Markov's inequality says that $\Prob[\F(S) \geq \eta] \geq \frac12\E[\F(S)] \geq \eta$.
Thus, sampling $O(1/\eta)$ sets $S$ gives reasonable probability that at least one of them has $\F(S) \geq \eta$. 
Running the test of Corollary~\ref{cor:test-discrep} on each of the sampled sets with threshold $\eta$ gives a correct algorithm that requires $\tilde O(1/\eta^2) = \tilde O\left(\frac{\log^2 n}{\varepsilon^4}\right)$ samples -- the square of our target sample complexity.  

We avoid this quadratic factor through more careful analysis using Levin's work-investment strategy~\cite{Goldreich14}. 
We specifically use the refinement due to Berman, Raskhodnikova, and Yaroslavtsev~\cite{BermanRY14}. 
Intuitively, $\E[\F(S)] \geq 2\eta$ could occur due to various distributions on $\F(S)$. 
It could be since $\F(S) = 1$ with probability $2\eta$. 
In this case, we could achieve an $\tilde O\left(\frac{\log n}{\varepsilon^2}\right)$ query complexity by drawing $\tilde O\left(1/\eta\right)$ random sets $S$ and testing for discrepancy on each with threshold $\Theta(1)$. 
Alternatively, perhaps $\F(S) \geq 2\eta$ with probability $1$.
In this case as well, we can achieve our target query complexity by drawing a single random set $S$ and testing for discrepancy with threshold $2\eta$. 
Morally similar properties hold for any distribution on $\F(S)$, though we do not know beforehand which case we fall into. 
Levin's work-investment strategy iterates over a range of multiplicatively decreasing thresholds and increasing number of samples per set, effectively trying all possibilities at the cost of only a logarithmic overhead.

\begin{proof}[Proof of \cref{lem:energy}]
Instantiate the last threshold in Line~\ref{ln:thresholds} of Algorithm~\ref{alg:main} to be $\eta$. 
By \cref{cor:expected-discrep}, $\E_{j,S}[\F(S)]\geq2\eta$. 
Our first step is to discount all sets with discrepancy below $\eta$, since our algorithm never attempts to test for discrepancies below $\eta$. 
They can contribute at most $\eta$ to the expectation, and thus
\[
\E_{j,S}[\F(S)\ind_{\{\F(S)\geq \eta\}}]
\geq\E_{j,S}[\F(S)]-\eta\geq \eta.
\]
	For each threshold $t$, let
\[
	\beta_t=\Prob_{j,S}[\F(S)\geq t],
\]
and instantiate
\[
T(t)=\frac{12t}{\eta}.
\]
We compute the probability that the lemma's conclusion does not hold:
\[
	\Prob[\F(S_{t,\ell})<t\text{ for every }t,\ell]
=\prod_t(1-\beta_t)^{T(t)}
\leq\prod_t \exp\!\left(-\frac{12t}{\eta} \beta_t\right)
=\exp\!\left(-\frac{12}{\eta}\sum_t t\beta_t\right).
\]
Consider some set $S$ with discrepancy $\F(S) \geq \eta$. 
The algorithm will test some $t^*$ such that $\F(S) \geq t^* \geq \frac{1}{2}\F(S)$. 
We can represent this property as a sum of indicators, which ``picks out'' the $t^*$ for which this holds:
\[
	\sum_t t\ind_{\{\F(S)\geq t \geq \frac{1}{2}\F(S)\}}
\geq\tfrac12\F(S)\ind_{\{\F(S)\geq \eta\}}.
\]
Taking the expectation of both sides with respect to the random choice of $j$ and $S$, 
\[
	\sum_t t \Prob_{j,S}\left[\F(S) \geq t \geq \frac{1}{2}\F(S)\right]
	\geq\tfrac12\E_{j,S}[\F(S)\ind_{\{\F(S)\geq \eta\}}] \geq \eta/2.
\]
Observing that 
$\Prob_{j,S}\left[\F(S) \geq t \geq \frac{1}{2}\F(S)\right] \leq \Prob_{j,S}\left[\F(S) \geq t \right] = \beta_t$, we have 
	\[
\sum_t t \beta_t \geq \eta/2.
	\]
	Thus, with probability at least $1 - \exp\left(-\frac{12}{\eta}\cdot \frac{\eta}{2}\right) = 1-e^{-6}>9/10$, some sampled set satisfies $\F(S_{t,\ell})\geq t$, as claimed. 
\end{proof}

\section{Lower Bound}\label{sec:lower-bound}

We proceed to prove the lower bound in Theorem~\ref{thm:equiv-main}, which shows that every algorithm for equivalence testing with non-adaptive conditional samples requires $\Omega\left(\frac{\log n}{\varepsilon^2}\right)$ samples. 
In Section~\ref{sec:lb-construction}, we describe the hard instance.
The two distributions share a random support of a random size. 
When $p=q$, they are both uniform on this support, and when $\dtv(p,q) \geq \varepsilon$, they feature a Paninski-style construction~\cite{Paninski08}, where a random half of the symbols in $p$ have their probabilities multiplicatively perturbed by $1+\varepsilon$ and the other half are perturbed by $1-\varepsilon$ ($q$ is perturbed in the opposite direction). 
In Section~\ref{sec:fixed-support}, we analyze the total variation distance between the transcripts (i.e., the sequence of responses) under the two cases ($p=q$ versus $\dtv(p,q) \geq \varepsilon$) for a fixed support and set of queries: the only randomness is over the pattern of the perturbation. 
A lengthy calculation involving the square of the likelihood ratio produces a bound which depends on the sizes of the intersections of the queries with the support.
In Section~\ref{sec:avg-support}, we randomize over the support to show that a single query (in expectation) contributes little to the total variation distance between transcripts. 
Finally, in Section~\ref{sec:finish-lb}, we assemble these pieces and complete the proof by relating the transcripts' total variation distance to the test's success probability. 

\subsection{The Hard Input Distribution}\label{sec:lb-construction}
Our hard instance is defined as follows. 
A random support $D \subseteq [n]$ is chosen to be the domain for both $p$ and $q$. 
In the case where $p = q$, they are both uniform over $D$. 
In the case where $\dtv(p,q) \geq \varepsilon$, we choose a random half of $D$ and multiply its weights under $p$ and $q$ by $1+\varepsilon$ and $1-\varepsilon$, respectively, and we reverse these multiplicative factors for the other half.  

In more detail, let $L=\log_2 n$.
Choose $J$ uniformly from $\{1,\ldots,L\}$, set $s=2^J$, and choose
$D\subseteq[n]$ uniformly among all subsets of size $s$.
Write $U_D=\operatorname{Unif}(D)$ for the uniform distribution over $D$. 
Choose a sign vector $\sigma$, with exactly $s/2$ $+1$'s and $s/2$ $-1$'s, uniformly at random: 
\[
  \sigma\in\{-1,+1\}^{D},
  \qquad \sum_{x\in D}\sigma_x=0.
\]
We consider the following null hypothesis $\mathcal{H}_0$ and alternative hypothesis $\mathcal{H}_1$: 
\begin{align*}
  \mathcal H_0 &: \quad p=q=U_D,\\
  \mathcal H_1 &: \quad
  p(x)=\frac{1+\varepsilon\sigma_x}{s},\qquad
  q(x)=\frac{1-\varepsilon\sigma_x}{s}
  \quad (x\in D),
\end{align*}
with both distributions zero outside $D$.

\subsection{Comparing Answers on a Fixed Support}
\label{sec:fixed-support}

We first fix the support $D$ and a list of queries, and compare the answers in the uniform and perturbed cases. We call the entire list of answers the \emph{transcript}. 
In this section, under those restrictions, we upper bound the total variation distance between the transcripts under $\mathcal H_0$ and $\mathcal H_1$. 
In the following section, we incorporate the randomness in the choice of $D$ as well.

We use the following function $h(m)$ to quantify how much a query contributes to our bound on the transcript distance, as a function of $m$, the size of the overlap between the query set and the support $D$. 
If a query intersects $D$ at $0$ or $1$ point, the response will follow the same distribution under both hypotheses. 
Otherwise, we quantify the contribution as $1/\sqrt{m}$ -- intuitively, larger intersections contribute less to the bound due to cancellations between the random signs.

\begin{equation}\label{lb:eq:weight}
  h(m)=
  \begin{cases}
    0,&m=0\text{ or }m=1,\\
	  1/\sqrt{m},&m\ge 2.
  \end{cases}
\end{equation}

\begin{lemma}\label{lb:lem:transcript}
Fix $0<\varepsilon\le1/2$, a support $D$, and a list of queries $(S_1,b_1),\ldots,(S_N,b_N)$, where
for each $i \in [N]$, $S_i\subseteq[n]$ is the query set and $b_i\in\{p,q\}$ is the queried distribution.
Let $s_i=|S_i\cap D|$ be the number of support points in the $i$th query.
Let $\mathsf T_0^D$ and $\mathsf T_1^D$ be the distributions of the entire
answer list under $\mathcal H_0$ and $\mathcal H_1$, respectively,
conditional on the support $D$ (but not the random perturbation vector $\sigma$). 

There is an absolute constant $C$ such that
\begin{equation}\label{lb:eq:transcript-bound}
  \TV(\mathsf T_0^D,\mathsf T_1^D)
  \le \min\left\{1,\ C\varepsilon^2
                   \sum_{i=1}^N h(s_i)\right\}.
\end{equation}
\end{lemma}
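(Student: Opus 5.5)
We prove the bound by a hybrid (telescoping) argument over the queries, not by a single global $\chi^2$ bound. A single $\chi^2$ bound only yields $\sqrt{\varepsilon^2\sum_i h(s_i)}$, whereas the claim is linear in $\varepsilon^2\sum_i h(s_i)$. Since $\TV\le 1$ always, we may assume $\Lambda:=\varepsilon^2\sum_{i=1}^N h(s_i)\le c_0$ for a small absolute constant $c_0$; otherwise the bound is trivial once $C\ge 1/c_0$. Queries with $s_i\in\{0,1\}$ have identical answer distributions under both hypotheses for every $\sigma$, namely the point mass or uniform on $S_i$ given by the oracle convention. They can therefore be dropped from both transcripts without changing the TV, so assume $s_i\ge2$ for all $i$.

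\textbf{Step 1: telescoping.} For $k=0,\ldots,N$, let $\mathsf H_k$ be the transcript law in which answers $1,\ldots,k$ are generated under $\mathcal H_1$ with one shared random $\sigma$, and answers $k+1,\ldots,N$ are generated independently under $\mathcal H_0$. Then $\mathsf H_0=\mathsf T_0^D$ and $\mathsf H_N=\mathsf T_1^D$, and
\[
\TV(\mathsf T_0^D,\mathsf T_1^D)\le\sum_{k=1}^N \TV(\mathsf H_{k-1},\mathsf H_k).
\]
The last $N-k$ answers are independent of everything else and identically distributed in $\mathsf H_{k-1}$ and $\mathsf H_k$, and the first $k-1$ answers have the same law in both. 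Hence
\[
\TV(\mathsf H_{k-1},\mathsf H_k)=\E_{a_{<k}}\Bigl[\TV\bigl(U_{S_k\cap D},\ \E_{\sigma\mid a_{<k}}[(b_k)^\sigma_{S_k}]\bigr)\Bigr],
\]
where $a_{<k}$ is drawn from the $\mathcal H_1$-mixture. So the task reduces to bounding the \emph{predictive} deviation of query $k$ by $O(\varepsilon^2h(s_k))$ on average over the past.

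\textbf{Step 2: one-query computation.} Write $m=s_k$, $R=S_k\cap D$, and $\bar\sigma_R=\frac1m\sum_{y\in R}\sigma_y$. For answer $x\in R$ we have
\[
(b_k)^\sigma_{S_k}(x)\cdot m=\frac{1\pm\varepsilon\sigma_x}{1\pm\varepsilon\bar\sigma_R}.
\]
Expand this to second order: $1\pm\varepsilon(\sigma_x-\bar\sigma_R)+O\bigl(\varepsilon^2(|\sigma_x\bar\sigma_R|+\bar\sigma_R^2)\bigr)$. Under the prior, Lemma~\ref{lb:lem:signs} (the balanced Khintchine variant) gives $\E[\bar\sigma_R^2]\lesssim1/m$, and $\E_\sigma[\sigma_x-\bar\sigma_R]=0$. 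So with no conditioning the first-order term vanishes, and the TV of the predictive law is
\[
\tfrac12\sum_x\tfrac1m\bigl|\E[\varepsilon^2 O(\sigma_x\bar\sigma_R+\bar\sigma_R^2)]\bigr|\lesssim\varepsilon^2\bigl(\E\bar\sigma_R^2+\sqrt{\E\bar\sigma_R^2}\,\cdot\text{(correlation)}\bigr).
\]
Here the covariance of $\sigma_x$ with $\bar\sigma_R$ is $O(1/m)$, so this is $O(\varepsilon^2/m)\le O(\varepsilon^2 h(m))$; at worst the estimate gives $O(\varepsilon^2/\sqrt m)$, which is exactly $h(m)$.

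\textbf{Step 3 (main obstacle): the posterior drift.} With conditioning on $a_{<k}$, the first-order term becomes $\varepsilon\,\E_{\sigma\mid a_{<k}}[\sigma_x-\bar\sigma_R]$. The plan is to show that its contribution averages to
\[
\E_{a_{<k}}\,\frac1m\sum_{x\in R}\bigl|\E[\sigma_x-\bar\sigma_R\mid a_{<k}]\bigr|\lesssim \varepsilon\, h(m)\sqrt{\Lambda}+\varepsilon\,h(m)
\]
or something comparable, so that, after multiplication by $\varepsilon$ and summation over $k$, the total is $O(\Lambda)$ when $\Lambda\le c_0$. To do this, write the posterior as the prior tilted by the likelihood ratio $\prod_{i<k}(1+\varepsilon\Delta_i(\sigma))$. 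The posterior mean contrast is then a covariance under the prior between $\sigma_x-\bar\sigma_R$ and this ratio. First I would try Cauchy–Schwarz: this covariance is at most $\sqrt{\mathrm{Var}(\sigma_x-\bar\sigma_R)}\cdot\sqrt{\chi^2}$, where $\chi^2$ is the $\chi^2$-divergence of the past-transcript mixture. The second-moment calculation of the past gives $\chi^2\le e^{C\Lambda}-1=O(\Lambda)$. Combined with the averaging over $x\in R$, which supplies the $1/\sqrt m$, this should give the needed $\varepsilon^2 h(s_k)\cdot O(1+\sqrt\Lambda)$ per step. The delicate points are (i) avoiding losing the $1/\sqrt m$ when passing the absolute value inside the average over $x$, and (ii) controlling the second-order remainder uniformly using $\varepsilon\le1/2$, so that $|\bar\sigma_R|\le1$ keeps denominators bounded below. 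Summing over $k$ then yields $\TV\le C\varepsilon^2\sum_i h(s_i)$.
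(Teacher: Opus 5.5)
\textbf{Gap 1: the premise of the hybrid argument is false.} A single $\chi^2$ bound does \emph{not} lose a square root here, and it is exactly how the paper proves the lemma. Write $\chi^2(\mathsf T_1^D\|\mathsf T_0^D)=\E_{\sigma,\sigma'}[\prod_i(1+K_i(\sigma,\sigma'))]-1\le\E_{\sigma,\sigma'}[e^Z]-1$ with $Z=\sum_iK_i$. The identity $\E_\sigma[r_{\sigma,i}(x)]=1$ gives $\E_{\sigma,\sigma'}[K_i]=0$, so the linear term of the exponential series vanishes. Lemma~\ref{lb:lem:signs} then gives $\|Z\|_k\le\gamma k$ with $\gamma=C_1\varepsilon^2\sum_ih(s_i)$. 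Hence $\chi^2=O(\gamma^2)$, whereas your Step 3 uses only the much weaker $O(\Lambda)$. So $\TV\le\frac12\sqrt{\chi^2}=O(\gamma)$ is already linear in $\varepsilon^2\sum_ih(s_i)$.

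\textbf{Gap 2: the telescoping route cannot work.} The per-step estimate you need in Step 3 is false, and the sum of hybrid distances is genuinely larger than the target.

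Take every query to be $(D,p)$ with $|D|=m$, so $\bar\sigma_R=0$ and $\Lambda=\varepsilon^2N/\sqrt m$. Given the past answers, the posterior on $\sigma$ is proportional to $\prod_y(1+\varepsilon\sigma_y)^{n_y}$, where $n_y$ counts past answers equal to $y$. So every observed symbol shifts its posterior mean by about $\varepsilon$. Already at $k=2$ one gets exactly $\E[\sigma_y\mid a_1=y]=\varepsilon$ and $\E[\sigma_x\mid a_1=y]=-\varepsilon/(m-1)$, so the predictive law is at distance exactly $\varepsilon^2/m$ from uniform.

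In general $\E[\sigma_x\mid a_{<k}]\approx\varepsilon\bigl(n_x-(k-1)/m\bigr)$. Hence, for $k\le m$:
\begin{itemize}
\item your drift average is $\approx2\varepsilon(k-1)/m$;
\item the $k$th hybrid distance is $\approx\varepsilon^2(k-1)/m$.
\end{itemize}
These exceed your targets $\varepsilon h(m)(1+\sqrt\Lambda)$ and $\varepsilon^2h(m)(1+\sqrt\Lambda)$ as soon as $k\gg\sqrt m$.

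Summing, $\sum_k\TV(\mathsf H_{k-1},\mathsf H_k)\approx\varepsilon^2N^2/(2m)=\Lambda\cdot N/(2\sqrt m)$. Taking $N=c_0\sqrt m/\varepsilon^2$ (which is at most $m$ once $m\ge c_0^2/\varepsilon^4$), this is $\approx c_0^2/(2\varepsilon^2)$, far above $1$. Meanwhile $\Lambda=c_0$ and the true distance is $O(c_0)$.

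The posterior really does learn $\sigma$. What keeps the full-transcript distance small is cancellation across steps, and the triangle inequality discards exactly that. Your Cauchy--Schwarz step is also misdirected: the covariance between $\sigma_x-\bar\sigma_R$ and the past likelihood is controlled by the $\chi^2$ of the posterior on $\sigma$ relative to the prior. That quantity is of order $\varepsilon^2(k-1)$ in this example, not the transcript-level $\chi^2$, so no sharpening of points (i) and (ii) rescues the plan.

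The fix is to drop the hybrid argument and bound the full-transcript $\chi^2$ directly, as above.
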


\begin{proof}
We use the second-moment method to bound the total variation distance, by looking at the second moment of the likelihood ratio. 
Specifically, we upper bound the total-variation distance between the transcripts in terms of their chi-squared distance:
\[
  \TV(\mathsf T_1^D,\mathsf T_0^D)
	\leq \tfrac12\sqrt{\chi^2(\mathsf T_1^D \| \mathsf T_0^D)}
  =\tfrac12\sqrt{\E_{\mathsf T_0^D}[\mathcal L^2]-1},
\]
where $\mathcal L$ is the likelihood ratio 
\[
  \mathcal L(x)
    =\frac{\mathsf T_1^D(x)}{\mathsf T_0^D(x)}.
\]

If a query has $s_i=0$ or $1$, then the answer distribution is identical under the two hypotheses and the likelihood ratio for this query will be $1$, regardless of the sign vector $\sigma$. 
For the sequel, we focus on the remaining queries which have $s_i \geq 2$. 

Focus on the likelihood ratio for a single answer. 
For a fixed sign vector $\sigma$, let $\bar\sigma_i$ be its average
	over the support points in query $i$ -- that is, the fraction of points that are spiked upwards minus the fraction of points that are spiked downwards, which gives a measure of the excess mass assigned to $p$ under $\mathcal{H}_1$. 
\[
  \bar\sigma_i=\frac1{s_i}\sum_{x\in S_i\cap D}\sigma_x,
\]
We let $r_{\sigma,i}(x)$ be the
ratio of the probability of answering $x$ under $\mathcal H_1$ to that
under $\mathcal H_0$.
For distribution $p$ and under $\mathcal{H}_0$, the probability of outputting any $x \in S_i \cap D$ is $1/s_i$. 
Under $\mathcal{H}_1$, the probability is $\frac{1 + \varepsilon\sigma_x}{s_i + \varepsilon s_i \bar \sigma_i}.$
Taking their ratio (and with a similar calculation for $q$), for any $x \in S_i \cap D$, we have:
\[
  r_{\sigma,i}(x)
  =\begin{cases}
    \displaystyle\frac{1+\varepsilon\sigma_x}{1+\varepsilon\bar\sigma_i},&b_i=p,\\[6pt]
    \displaystyle\frac{1-\varepsilon\sigma_x}{1-\varepsilon\bar\sigma_i},&b_i=q.
  \end{cases}
\]
Note that these likelihood ratios are conditioned on a particular $\sigma$, whereas the transcript distribution $\mathsf T_1^D$ chooses $\sigma$ uniformly at random, which averages the likelihood ratio over all $\sigma$.  
For any $x \in S_i \cap D$, a symmetry argument implies that $\E_\sigma[p(x \mid S_i)] = \E_\sigma[q(x \mid S_i)] = 1/s_i$.
	Since the same is true for a sample under $\mathcal{H}_0$ (since the distribution is uniform), we have that 
\begin{equation}
	\E_\sigma[r_{\sigma,i}(x)]=1. \label{lb:eq:one-coordinate}
\end{equation}

The challenge in analysis is that we are doing \emph{several} samples with respect to a single sign vector $\sigma$, not just one sample, and we must account for correlations between the likelihood ratios of individual samples. 
The likelihood ratio for the whole transcript is as follows: 
\[
  \mathcal L(x)
  =\E_\sigma\left[\prod_i r_{\sigma,i}(x_i)\right].
\]
Note that, conditioned on a fixed sign vector $\sigma$, the answers are independent, justifying the product of the $r_{\sigma,i}$'s inside the expectation. 

We look at the square of the likelihood ratio by taking a product with the same expression using independent randomness; we let $\sigma'$ denote an independently sampled sign vector.  
\[
  \mathcal L(x)^2
  =\E_{\sigma}\left[\prod_i r_{\sigma,i}(x_i)\right]\E_{\sigma'}\left[\prod_i r_{\sigma',i}(x_i)\right]
  =\E_{\sigma,\sigma'}\left[\prod_i r_{\sigma,i}(x_i)r_{\sigma',i}(x_i)\right].
\]

Recall our goal is to bound 
	\begin{equation*}
	\begin{aligned}
		\E_{x \sim \mathsf T_0^D}[\mathcal L^2] &= \E_{x \sim \mathsf T_0^D}\left[\E_{\sigma,\sigma'}\left[\prod_i r_{\sigma,i}(x_i)r_{\sigma',i}(x_i)\right]\right] \\ 
		&= \E_{\sigma,\sigma'}\left[\E_{x \sim \mathsf T_0^D}\left[\prod_i r_{\sigma,i}(x_i)r_{\sigma',i}(x_i)\right]\right] \\
		&= \E_{\sigma,\sigma'}\left[\prod_i \E_{x_i \sim \operatorname{Unif}(S_i \cap D)} [r_{\sigma,i}(x_i)r_{\sigma',i}(x_i)]\right].
	\end{aligned}
	\end{equation*}
	We can factor the expectation in the last equality since the $x_i$'s are independent under the null. 

	We rewrite each individual factor in this product as 
	\[
		1 + K_i(\sigma, \sigma') := \E_{x_i \sim \operatorname{Unif}(S_i \cap D)} [r_{\sigma,i}(x_i)r_{\sigma',i}(x_i)].
	\]
	We proceed to derive an expression for 
	\[
	K_i(\sigma,\sigma') = \E_{x_i \sim \operatorname{Unif}(S_i \cap D)} [r_{\sigma,i}(x_i)r_{\sigma',i}(x_i)] -1. 
	\]
We illustrate by focusing on a query to $p$; a query to $q$ will be analogous. 
Fix the sign vectors $\sigma, \sigma'$. 
We substitute the expression for $r_{\sigma,i}$ and $r_{\sigma',i}$ and expand the expectation: 
\[
  K_i(\sigma,\sigma')
  =\frac1{s_i}\sum_{x_i\in S_i\cap D}
    \frac{(1+\varepsilon\sigma_{x_i})(1+\varepsilon\sigma'_{x_i})}
         {(1+\varepsilon\bar\sigma_i)(1+\varepsilon\bar{\sigma}'_i)}-1 
  =\frac1{s_i}\sum_{x_i\in S_i\cap D}
  \frac{(1+\varepsilon\sigma_{x_i})(1+\varepsilon\sigma'_{x_i})
          -(1+\varepsilon\bar\sigma_i)(1+\varepsilon\bar{\sigma}'_i)}
         {(1+\varepsilon\bar\sigma_i)(1+\varepsilon\bar{\sigma}'_i)}.
\]
We expand the numerator as follows:
\begin{align*}
  &(1+\varepsilon\sigma_{x_i})(1+\varepsilon\sigma'_{x_i})
    -(1+\varepsilon\bar\sigma_i)(1+\varepsilon\bar{\sigma}'_i)\\
  &\qquad=\varepsilon(\sigma_{x_i}+\sigma'_{x_i}
                           -\bar\sigma_i-\bar{\sigma}'_i)
       +\varepsilon^2(\sigma_{x_i}\sigma'_{x_i}
                           -\bar\sigma_i\bar{\sigma}'_i).
\end{align*}
The terms which are linear in $\varepsilon$ vanish when we average over $x_i \in S_i \cap D$, as we substitute  $\bar\sigma_i=\frac1{s_i}\sum_{x_i \in S_i\cap D}\sigma_{x_i}$
\[
  \frac1{s_i}\sum_{x_i\in S_i\cap D}
    (\sigma_{x_i}+\sigma'_{x_i}-\bar\sigma_i-\bar{\sigma}'_i)
  =\bar\sigma_i+\bar{\sigma}'_i-\bar\sigma_i-\bar{\sigma}'_i=0.
\]
Only the terms which are quadratic in $\varepsilon$ remain, yielding
\[
  K_i(\sigma,\sigma')
  =\varepsilon^2\,
    \frac{\displaystyle\frac1{s_i}\sum_{x_i\in S_i\cap D}
                         \sigma_{x_i}\sigma'_{x_i}
          -\bar\sigma_i\bar{\sigma}'_i}
         {(1+\varepsilon\bar\sigma_i)(1+\varepsilon\bar{\sigma}'_i)}.
\]
Doing the same calculations for a query to $q$, we obtain an expression where the only difference is the denominator becomes $(1-\varepsilon\bar\sigma_i)(1-\varepsilon\bar{\sigma}'_i)$.

At this point, we have obtained:
\begin{equation}\label{lb:eq:chi-identity}
  \E_{\mathsf T_0^D}[\mathcal L^2]
  =\E_{\sigma,\sigma'}\left[\prod_i(1+K_i(\sigma,\sigma'))\right].
\end{equation}
We further upper bound this using the inequality $1 + x \leq e^x$: 

\begin{equation*}
  \E_{\mathsf T_0^D}[\mathcal L^2]
	\leq\E_{\sigma,\sigma'}\left[\prod_i\exp\left(K_i(\sigma, \sigma')\right)\right]
	=\E_{\sigma,\sigma'}\left[\exp\left(\sum_i K_i(\sigma, \sigma')\right)\right] 
	= \E_{\sigma,\sigma'}\left[e^Z\right],
\end{equation*}
where we have defined
\[
  Z=\sum_i K_i(\sigma,\sigma').
\]
We use the power series expansion of this exponential:
\[
  \E_{\sigma,\sigma'}\left[e^Z\right]
  = \sum_{k=0}^\infty \E_{\sigma,\sigma'}\left[\frac{Z^k}{k!}\right].
\]
The $k = 0$ term evaluates to $1$. 
The $k= 1$ term $\E_{\sigma, \sigma'}[Z]=\sum_i \E_{\sigma, \sigma'} [K_i(\sigma,\sigma')]$ vanishes:
\[
	\E_{\sigma,\sigma'}[K_i(\sigma,\sigma')] = \E_{x_i \sim \operatorname{Unif}(S_i \cap D)} [\E_\sigma [r_{\sigma,i}(x_i)]\E_{\sigma'} [r_{\sigma',i}(x_i)]]  -1 = 1^2 -1 = 0,
\]
where the first equality uses the definition of $K_i$ and independence of $\sigma$ and $\sigma'$, and the second equality uses \eqref{lb:eq:one-coordinate}. 
We then sum over $i$ to get $\E_{\sigma, \sigma'}[Z] = 0$.

Consequently,
\[
  \E_{\sigma,\sigma'}\left[e^Z\right]
  =1+\sum_{k=2}^\infty\frac{\E_{\sigma,\sigma'}[Z^k]}{k!}
  \le1+\sum_{k=2}^\infty\frac{\E_{\sigma,\sigma'}[|Z|^k]}{k!}
  =1+\sum_{k=2}^\infty\frac{\|Z\|_k^k}{k!}.
\]
It remains to bound these absolute moments. We do this by bounding
the moments of each $K_i$ and then applying the triangle inequality.
The proof of the following lemma (a variant of Khintchine inequality) appears in Appendix~\ref{app:balanced-signs}. 

\begin{restatable}{lemma}{balancedsignbound}\label{lb:lem:signs}
Let $D$ have even cardinality, and let $\sigma$ be uniformly sampled from the set of all balanced sign vectors on $D$. 
There is an absolute constant $C_0$ such that, for
every vector $a\in\mathbb{R}^D$ and every $k\ge 2$,
\[
  \left\|\sum_{x\in D}a_x\sigma_x\right\|_k
  \le C_0\sqrt k\left(\sum_{x\in D}a_x^2\right)^{1/2},
  \qquad
  \left\|X\right\|_k=(\E[|X|^k])^{1/k}.
\]
\end{restatable}
Specifically, we apply this bound to the two terms in the numerator of $K_i(\sigma, \sigma')$:
\[
  \frac1{s_i}\sum_{x\in S_i \cap D}\sigma_x\sigma'_x
  \qquad \text{and} \qquad
  \bar\sigma_i\bar{\sigma}'_i.
  \]

For the first term, fix $\sigma'$ and choose the coefficients in
Lemma~\ref{lb:lem:signs} to be
\[
  a_x=\begin{cases}
    \sigma'_x/s_i,&x\in S_i\cap D,\\
    0,&x\in D\setminus S_i.
  \end{cases}
\]
Then
\[
  \sum_{x\in D}a_x\sigma_x
  =\frac1{s_i}\sum_{x\in S_i\cap D}\sigma_x\sigma'_x,
  \qquad
  \sum_{x\in D}a_x^2
  =\frac{s_i}{s_i^2}=\frac1{s_i}.
\]
Thus, for every integer $k\ge2$, the lemma gives
\[
  \E_\sigma\left[
    \left|\frac1{s_i}\sum_{x\in S_i\cap D}\sigma_x\sigma'_x\right|^k
    \,\middle|\,\sigma'\right]
  \le\left(C_0\sqrt{\frac{k}{s_i}}\right)^k.
\]
This bound holds for every fixed $\sigma'$, so averaging over $\sigma'$
gives the same bound on the joint expectation over both sign vectors.

For the second term, choose $a_x=1/s_i$ on $S_i\cap D$ and zero
elsewhere in $D$. Now $\sum_xa_x\sigma_x=\bar\sigma_i$, and the
squared coefficients again sum to $1/s_i$. The lemma gives
\[
  \E_\sigma[|\bar\sigma_i|^k]
  \le\left(C_0\sqrt{\frac{k}{s_i}}\right)^k.
\]
The same bound applies to $\bar{\sigma}'_i$. Independence of the two
sign vectors then gives
\begin{align*}
  \E_{\sigma,\sigma'}[|\bar\sigma_i\bar{\sigma}'_i|^k]
  &=\E_\sigma[|\bar\sigma_i|^k]\,
    \E_{\sigma'}[|\bar{\sigma}'_i|^k]\\
  &\le\left(C_0^2\frac{k}{s_i}\right)^k.
\end{align*}
Taking $k$th roots, we obtain the following bounds, with both norms
over the two independent sign vectors:
\[
  \left\|\frac1{s_i}\sum_{x\in S_i\cap D}\sigma_x\sigma'_x\right\|_k
    \le C_0\sqrt{\frac k{s_i}},
  \qquad
  \left\|\bar\sigma_i\bar{\sigma}'_i\right\|_k
    =\left\|\bar\sigma_i\right\|_k\left\|\bar{\sigma}'_i\right\|_k
    \le C_0^2\frac k{s_i}.
\]

We turn our attention to the denominator of $K_i(\sigma, \sigma')$, which is $(1 \pm \varepsilon \bar \sigma_i)(1 \pm \varepsilon \bar \sigma_i')$, where the sign depends on whether the query is to $p$ or $q$. 
Regardless, based on the range of $\bar \sigma_i$, we can lower bound this expression by $(1 - \varepsilon)^2 \geq 1/4$. 
Using this denominator bound and then the triangle inequality gives
\begin{equation}\label{lb:eq:kernel-moment}
\begin{aligned}
  \|K_i\|_k
  &\le4\varepsilon^2
    \left\|\frac1{s_i}\sum_{x_i\in S_i\cap D}\sigma_{x_i}\sigma'_{x_i}
           -\bar\sigma_i\bar{\sigma}'_i\right\|_k\\
  &\le4\varepsilon^2\left(
    \left\|\frac1{s_i}\sum_{x_i\in S_i\cap D}\sigma_{x_i}\sigma'_{x_i}\right\|_k
    +\|\bar\sigma_i\bar{\sigma}'_i\|_k\right)\\
  &\le4\varepsilon^2\left(C_0\sqrt{\frac{k}{s_i}}
                          +C_0^2\frac{k}{s_i}\right)\\
  &\le C_1\varepsilon^2\frac{k}{\sqrt{s_i}}
   =C_1\varepsilon^2k\,h(s_i).
\end{aligned}
\end{equation}
The last inequality uses $\sqrt{k}\le k$, $1/s_i\le1/\sqrt{s_i}$, and introduces a new absolute constant $C_1 = 4(C_0 + C_0^2)$.
Observe here that the dependence on the size of the intersection $s_i$ is $1/\sqrt{s_i}$, which is precisely how we defined the function $h(s_i)$. 

We can now bound the moments of $Z$.
\begin{equation}\label{lb:eq:Z-moment}
  \|Z\|_k
  =\left\|\sum_iK_i\right\|_k
  \le\sum_i\|K_i\|_k
  \le\sum_i C_1\varepsilon^2k\,h(s_i)
  =C_1\varepsilon^2k\sum_i h(s_i)=\gamma k,
\end{equation}
where we let $\gamma = C_1 \varepsilon^2 \sum_i h(s_i)$. 
The two inequalities are triangle inequality and \eqref{lb:eq:kernel-moment}.

We now return to the series we were trying to bound.
First, we consider the case $e \gamma \leq 1/2$. 
Equation~\eqref{lb:eq:Z-moment} gives
$\E_{\sigma,\sigma'}[|Z|^k]\le(\gamma k)^k$, and $k!\ge(k/e)^k$
turns the bound into a geometric series:
\begin{align*}
  \E_{\mathsf T_0^D}[\mathcal L^2]
  &\le \E_{\sigma,\sigma'}[e^Z]\\
  &\le 1+\sum_{k=2}^\infty\frac{\E_{\sigma,\sigma'}[|Z|^k]}{k!}\\
  &\le 1+\sum_{k=2}^\infty(e\gamma)^k
   \le 1+2e^2\gamma^2.
\end{align*}
The TV inequality from the start of the proof now gives
\[
  \TV(\mathsf T_1^D,\mathsf T_0^D)
  \le\tfrac12\sqrt{\E_{\mathsf T_0^D}[\mathcal L^2]-1}
  \le e\gamma/\sqrt{2}= \frac{eC_1}{\sqrt{2}}\varepsilon^2 \sum_i h(s_i),
\]
as desired by the lemma statement.
When $e\gamma>1/2$, we can use the trivial bound $\TV\le1\le2e\gamma$ to conclude the proof. 
\end{proof}

\subsection{Averaging over the Support Size}
\label{sec:avg-support}
The previous section assumed a fixed support $D$.
In this section, we incorporate the randomness in the choice of $D$. 
Informally speaking, a query set $S$ will only be informative if the size of its intersection with $D$ is non-zero (more precisely, at least $2$) but also not too large (as the contribution to the total variation bound decays as $1/\sqrt{|S \cap D|}$). 
Thus, a fixed query set $S$ would only contribute significantly under a few choices of the size of $D$ (on average, accounting for the random selection of $D$ itself). 
Since our hard instance construction randomizes over $\log n$ scales of the size of $D$, unless we take at least that many queries, the distance between transcripts under $\mathcal H_0$ and $\mathcal H_1$ remains small.



\begin{lemma}
\label{lb:lem:scales}
	Let $L = \log_2 n$, choose $J$ uniformly from $\{1, 2, \dots, \log_2 n\}$, and set $s = 2^J$. Choose $D \subseteq [n]$ to be a uniformly random subset of $[n]$ of size $s$. 
	For every fixed set
$S\subseteq[n]$,
\begin{equation}\label{lb:eq:scale-average}
  \E_D[h(\left|S\cap D\right|)]\le\frac{20}{L}.
\end{equation}
\end{lemma}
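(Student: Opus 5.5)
Fix $S$ and write $k=|S|$. Conditional on $J$ (so $s=2^J$), the overlap $X=|S\cap D|$ is hypergeometric with mean $\mu_J=ks/n=k2^J/n$. I will bound $\E[h(X)\mid J]$ separately for each $J$ and then sum over $J$. The aim is to show $\sum_{J=1}^{L}\E[h(X)\mid J]\le 20$, since averaging over the $L$ equally likely values of $J$ then gives \eqref{lb:eq:scale-average}. The idea is that $h(X)$ is only non-negligible when $\mu_J=\Theta(1)$. Because $\mu_J$ doubles with each increment of $J$, the terms on either side of the critical scale form geometric series.

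\emph{Small scales} ($\mu_J\le 1$). Here $h(X)\le \ind\{X\ge 2\}$, so
\[
\E[h(X)\mid J]\le \Prob[X\ge 2\mid J]\le \E\!\left[\binom{X}{2}\,\middle|\,J\right]=\binom{k}{2}\frac{s(s-1)}{n(n-1)}\le \mu_J^2 .
\]
The only computation needed is the second factorial moment of the hypergeometric distribution. Summing over the $J$ with $\mu_J\le 1$, the values $\mu_J$ are powers of two times a constant and all at most $1$, so $\sum \mu_J^2\le 1+1/4+1/16+\cdots\le 4/3$.

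\emph{Large scales} ($\mu_J>1$). I will show $\E[h(X)\mid J]\le C/\sqrt{\mu_J}$. Split on whether $X<\mu_J/2$ or $X\ge\mu_J/2$:
\[
\E[h(X)\mid J]\le \Prob[X<\mu_J/2]+\sqrt{2/\mu_J}.
\]
The hypergeometric distribution satisfies the same Chernoff lower-tail bound as the binomial (Hoeffding's comparison), so $\Prob[X<\mu_J/2]\le e^{-\mu_J/8}$. This is at most $c/\sqrt{\mu_J}$ for a suitable constant $c$, using $\sup_{x>0}\sqrt{x}\,e^{-x/8}=\sqrt{4/e}$. If one prefers to avoid Chernoff, Chebyshev also suffices: $\mathrm{Var}(X)\le\mu_J$ gives $\Prob[X<\mu_J/2]\le 4/\mu_J\le 4/\sqrt{\mu_J}$ when $\mu_J\ge 1$. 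Either route gives $\E[h(X)\mid J]\le (\sqrt2+4)/\sqrt{\mu_J}$. The values of $\mu_J$ exceeding $1$ grow by a factor of $2$ per step, so $\sum 1/\sqrt{\mu_J}\le \sum_{m\ge0}2^{-m/2}=1/(1-2^{-1/2})\approx 3.42$.

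\emph{Combining.} The two regimes together give $\sum_J\E[h(X)\mid J]\le 4/3+(\sqrt2+4)\cdot 3.42\approx 19.8\le 20$. The hardest part is constant bookkeeping to land exactly on $20$. To get there, I will use Chebyshev with the slightly better variance bound $\mathrm{Var}(X)\le\mu_J(1-s/n)\le\mu_J$. I will also treat the boundary term $\mu_J\in(1,2]$ directly by the trivial bound $h\le 1/\sqrt2$, which only tightens the estimate.

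The degenerate case $k\le 1$ gives $h\equiv0$. When $s=n$ the hypergeometric is deterministic, but the bounds above still hold.
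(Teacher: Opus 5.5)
Your proof is correct and follows essentially the same route as the paper: condition on the scale $J$, split at $\mu_J=1$, bound the scales with $\mu_J>1$ by Chebyshev with $\operatorname{Var}(X)\le\mu_J$ to get $(\sqrt2+4)/\sqrt{\mu_J}$, and sum two geometric series. The only difference is on the scales with $\mu_J\le 1$: the paper uses the cruder pointwise bound $h(M)\le M/2$, which gives $\lambda/2$ per scale and a total of at most $1$, whereas you use the second factorial moment to get $\mu_J^2$ per scale and a total of at most $4/3$; either choice keeps the final sum below $20$.
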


\begin{proof}
Let $b=|S|\ge1$.
For the time being, fix the support size $s = 2^j$ and 
	define
\[
  M=|S\cap D|,\qquad \lambda=\E[M]=bs/n.
\]
The randomness in the following bounds is exclusively over a uniformly random support of this fixed size, and we incorporate randomness in $s$ itself at the very end of the proof. 

We bound $\E[h(M)]$ separately for small and large values of $\lambda$.
We start with $\lambda\le1$: intuitively, this captures the case where the query is unlikely to contain enough support points to contribute to the expectation. 
Since $h(M)=0$ when $M=0$ or $1$, and $h(M)\le1\le M/2$
when $M\ge2$, we have $h(M)\le M/2$. Taking expectations gives
\begin{equation}\label{lb:eq:small-scale}
  \E[h(M)]\le\frac{\E[M]}2=\frac\lambda2.
\end{equation}

If $\lambda>1$ (i.e., the expected intersection is substantial), split according to whether $M\ge\lambda/2$.
If so, then $h(M)$ will be small: $h(M) \leq \sqrt{2/\lambda}$. 
In the other case, $h(M)$ may be large (we again will use the crude bound $h(M) \leq 1$), but we bound the probability that this case happens using Chebyshev's inequality. 
The indicators $\mathbf1_{\{x\in D\}}$, for $x\in S$, are negatively
correlated random variables: conditioning on one point belonging to the
	fixed-size support makes another less likely to belong. Hence, noting that $M = \sum_{x \in S}\mathbf1_{\{x\in D\}}$
\[
  \operatorname{Var}(M)
  \le\sum_{x\in S}\operatorname{Var}(\mathbf1_{\{x\in D\}})
  =\sum_{x\in S}\frac{s}{n}\left(1-\frac{s}{n}\right)
  \le\sum_{x\in S}\frac{s}{n}
  =\frac{bs}{n}=\lambda.
\]
Chebyshev's inequality then gives
$\Prob(M<\lambda/2)\le4/\lambda$. Consequently,
\begin{equation}\label{lb:eq:large-scale}
  \E[h(M)]
  \le\sqrt{\frac2\lambda}+\Prob(M<\lambda/2)
  \le\sqrt{\frac2\lambda}+\frac4\lambda
  \le\frac{\sqrt2+4}{\sqrt\lambda}.
\end{equation}

At this point, we sum over the choice of the support size $s = 2^j$. 
The expected intersection size
$\lambda_j=b2^j/n$ doubles at each successive scale.
We split the analysis as we did before, when $\lambda_j \leq 1$ and when $\lambda_j > 1$. 
	In the former case, a geometric series bounds the sum of conditional expectations $\sum_{j : \lambda_j \leq 1} \E\bigl[h(|S\cap D|)\mid s = 2^j]$ by $\frac{1}{2} \sum_{k \geq 0} \left(\frac12\right)^{k} = 1$. 
	In the latter case, we can similarly bound the sum of conditional expectations $\sum_{j : \lambda_j > 1} \E\bigl[h(|S\cap D|)\mid s = 2^j]$ by the geometric series $(\sqrt{2} + 4) \sum_{k \geq 0} \left(\frac{1}{\sqrt{2}}\right)^{k} = \frac{\sqrt{2}+4}{1 - 2^{-1/2}} \leq 19$.
Together,
\[
  \sum_{j=1}^L
    \E\bigl[h(|S\cap D|)\mid s=2^j\bigr]
  \le1+\frac{\sqrt2+4}{1-2^{-1/2}}\leq20.
\]
The support scale is uniform among these $L$ choices, so dividing by
$L$ gives the desired bound on expectation.
\end{proof}

\subsection{Completing the lower bound}
\label{sec:finish-lb}
At this point, we complete the argument by assembling the two pieces from the previous sections. 

\begin{proof}[Proof of the lower bound of Theorem~\ref{thm:equiv-main}]
	Suppose an algorithm makes $N$ queries, and fix its random seed $\omega$. 
	We represent its query list as 
\[
  (S_1(\omega),b_1(\omega)),\ldots,(S_N(\omega),b_N(\omega)),
\]
where $S_i$ denotes the $i$-th query set and $b_i \in \{p,q\}$ denotes which oracle we query. 
Note that since the algorithm is non-adaptive, the query list is independent of the input. 

We augment the transcript by including the identity of the support $D$ and the random seed $\omega$. 
This can only give the testing algorithm more information and thus will not increase its error probability.
Conditioning on $(\omega, D)$, the answer lists under $\mathcal H_0$ and
$\mathcal H_1$ are exactly those compared in
Lemma~\ref{lb:lem:transcript}, where the fixed query list is $(S_i(\omega),b_i(\omega))_{i=1}^{N}$.
Let $\mathsf T_0,\mathsf T_1$ denote the distribution of the augmented transcripts, 
and let $\mathsf T_0^{D,\omega},\mathsf T_1^{D,\omega}$ denote the distribution of the augmented transcripts conditioned on a particular support $D$ and query list determined by a random seed $\omega$. 
Because the augmented transcripts have the
same distribution of $(\omega,D)$ under both $\mathcal H_0$ and $\mathcal H_1$,
\[
  \TV(\mathsf T_0,\mathsf T_1)
  =\E_{\omega,D}\left[
      \TV(\mathsf T_0^{D,\omega},\mathsf T_1^{D,\omega})\right].
\]
Applying Lemma~\ref{lb:lem:transcript} for each fixed $(D,\omega)$,
then using linearity of expectation and Lemma~\ref{lb:lem:scales}, gives
\begin{align}
  \TV(\mathsf T_0,\mathsf T_1)
  &\le C\varepsilon^2\,
       \E_{\omega,D}\left[\sum_{i=1}^{N}
          h(\left|S_i(\omega)\cap D\right|)\right]\nonumber\\
  &= C\varepsilon^2\,
       \E_\omega\left[\sum_{i=1}^{N}
          \E_D[h(\left|S_i(\omega)\cap D\right|)]\right]\nonumber\\
  &\le C\varepsilon^2\,
       \E_\omega\left[\sum_{i=1}^{N}\frac{20}{L}\right]\nonumber\\
  &= \frac{20C\varepsilon^2N}{L}.
     \label{lb:eq:final-tv}
\end{align}

Correctness requires us to output that $p=q$ with probability at least $2/3$ under $\mathcal H_0$ and with probability at most $1/3$ under $\mathcal H_1$. 
These correspond to cases when we are given transcripts drawn from distributions $\mathsf T_0$ and $\mathsf T_1$, respectively.
We can lower bound the total variation distance between these cases by this gap in probabilities, and thus
\[
  \frac13\le\TV(\mathsf T_0,\mathsf T_1)
       \le\frac{20C\varepsilon^2N}{L}.
\]
It follows that
\[
  N\ge\frac{L}{60C\varepsilon^2}
    =\Omega\!\left(\frac{\log n}{\varepsilon^2}\right),
\]
as desired.
\end{proof}

\bibliographystyle{alpha}
\bibliography{biblio}

\appendix

\section{Testing Primitives}

\subsection{\textsc{Test-equal} Pseudocode}
\label{sec:test-equal}

The pseudocode for \textsc{Test-equal} appears in Algorithm~\ref{alg:test-equal}. 
Note that this procedure employs the common technique of \emph{Poissonization}: rather than drawing exactly $m$ samples from a distribution, we draw a random $\Poi(m)$ samples.
This is convenient for the sake of analysis, as the number of occurrences of each symbol will be \emph{independently} distributed as $\Poi(m\cdot \alpha)$, where $\alpha$ is the probability of that symbol. 
With high probability, the total number of samples used by \textsc{Test-equal} is $O\left(\frac{\log (1/\delta)}{t}\right)$ -- consequently, we can simply increase our (fixed) sample budget by a constant factor, and if this budget is exceeded due to the Poissonization (a rare event), we can answer arbitrarily and fold any incorrect answer into the failure probability $\delta$. 

\begin{algorithm}[H]
\caption{\textsc{Test-equal}~\cite{FalahatgarJOPS15}}
\label{alg:test-equal}
\begin{algorithmic}[1]
\Require error bound $t$, error probability $\delta$, distributions $\Ber(u)$ and $\Ber(v)$.
\Statex Repeat $18 \log \frac{1}{\delta}$ times and output the majority:
\State
	Let $m' = \Poi(m)$ and $m'' = \Poi(m)$ be two independent Poisson variables with mean $m = O(1/t)$.
\State
Draw samples $x_1,x_2\ldots x_{m'}$
from the first distribution and
$y_1,y_2\ldots y_{m''}$ from the second one.
\State
Let $m_1 = \sum_{i=1}^{m'} x_i$ and $m_2 = \sum_{i=1}^{m''} y_i$.
\State If
$\frac{(m_1-m_2)^2-m_1-m_2}{m_1+m_2-1}
+\frac{((m'-m_1)-(m''-m_2))^2-(m'-m_1)-(m''-m_2)}{(m'-m_1)+(m''-m_2)-1}
\leq\frac{mt}{2}$,
then output \texttt{same}, else \texttt{diff}.
\end{algorithmic}
\end{algorithm}

\subsection{Proof of Corollary~\ref{cor:test-discrep}}
\label{sec:test-discrep}
The algorithm first draws $d = O\left(\frac{\log (1/\delta)}{t}\right)$ samples from both $p$ and $q$.
It then runs \textsc{Test-equal} on each of the observed symbols, of which there are at most $2d$, with parameters $t$ and $\delta' = \frac{\delta}{2d}$. 
To run these tests, it draws a fresh batch of samples of size $O\left(\frac{\log (1/\delta')}{t}\right) = O\left(\frac{\log(1/t) + \log (1/\delta)}{t}\right)$, which is reused across all tests. 
If all of the tests return \texttt{same}, then return \texttt{same}, if any of the tests return \texttt{diff}, then return \texttt{diff}. 

If $p = q$, then a union bound over all $\leq 2d$ tests implies that, with probability $\geq 1 - \delta$, they will all return \texttt{same} and thus the test correctly outputs \texttt{same}. 
On the other hand, suppose there exists some $i^*$ such that
$\frac{(p(i^*)-q(i^*))^2}{(p(i^*)+q(i^*))(2-p(i^*)-q(i^*))} \geq t$.
Algebraic manipulation implies that $p(i^*)+q(i^*) \geq t$, and thus the probability that $i^*$ is not observed in either of the two batches is at most $(1 - p(i^*))^d (1-q(i^*))^d \leq e^{-t d} \leq \delta/2$.
The test on symbol $i^*$ will correctly output \texttt{diff} with probability at least $1 - \delta/2d$ and thus the test outputs \texttt{diff} with probability at least $1 - \delta/2 - \delta/2d$.

\section{A Moment Bound for Random Restrictions}\label{app:moment}

We restate the lemma for convenience:

\momentbound*
\begin{proof}
	We rescale all quantities by units of $\rho$: let $a' = a/\rho$, $b' = b/\rho$, and $U' = U/\rho = \sum_\ell b'_\ell \xi_\ell$. 
Define
\[
\tau=\E[U']=\sum_\ell b'_\ell \rho = \sum_\ell b_\ell =T-a\leq T\leq1,
	\qquad s_r=\sum_\ell\rho b'^r_\ell.
\]
	Since $b'_\ell\leq2$, $s_r = \sum_\ell b'^{r-1}_\ell b'_\ell \rho \leq2^{r-1}\tau$. 
	We proceed to bound the first four moments of $U'$. 
	To do so, we can expand the powers of $U'$ and, using multinomial expansion, group terms by which indices coincide. 
	To illustrate, if we were to expand $U'^2 = \left(\sum_\ell b'_\ell \xi_\ell \right)^2$, we would have the terms where the indices do match, $\sum_\ell  b'^2_\ell \xi_\ell^2 $, and don't match, $\sum_\ell \sum_{k\neq \ell} b'_\ell b'_k \xi_\ell \xi_k$. 
	Taking expectation of the first sum, it comes to $\sum_\ell \rho b'^2_\ell = s_2$, which is upper bounded by $2\tau$. 
	Taking the expectation of the second sum and using independence of the $\xi$'s, we get $\sum_\ell \sum_{k\neq \ell} \rho^2 b'_\ell b'_k$.
	We can upper bound this by the unrestricted sum $(\sum_\ell \rho b'_\ell)^2 = s_1^2 = \tau^2 \leq \tau$.
	Other cases for the higher moments follow similarly.

\begin{align*}
\E[U']&=s_1=\tau,\\
\E[U'^2]&\leq s_2+s_1^2
\leq2\tau+\tau^2\leq3\tau,\\
\E[U'^3]&\leq s_3+3s_2s_1+s_1^3
\leq4\tau+6\tau^2+\tau^3\leq11\tau,\\
\E[U'^4]&\leq s_4+4s_3s_1+3s_2^2+6s_2s_1^2+s_1^4\\
&\leq8\tau+28\tau^2+12\tau^3+\tau^4\leq49\tau.
\end{align*}
Therefore, using $a'\leq2$,
\begin{align*}
\rho^{-4}\E[(a+U)^3U]
&=\E[(a'+U')^3U']\\
&\leq\E[(2+U')^3U']\\
&=8\E[U']+12\E[U'^2]+6\E[U'^3]+\E[U'^4]\\
&\leq(8+36+66+49)\tau=159\tau\leq159T.
\end{align*}
\end{proof}
\section{A Moment Bound for Balanced Signs}\label{app:balanced-signs}

In this section, we prove the following technical lemma:

\balancedsignbound*

Note that the challenge here comes from the balance requirement, as that introduces dependence between the signs. 
Otherwise, if the $\sigma_x$ were independent Rademacher random variables, this would be recognizable as the Khintchine inequality. 
\begin{lemma}[Khintchine inequality~\cite{Vershynin18}]\label{lem:khintchine}
Let $D$ be a finite set, and let the signs $(\xi_x)_{x\in D}$ be
independently and uniformly sampled from $\{-1,+1\}$.
There is an absolute constant $C$ such that, for
 every vector $a\in\mathbb{R}^D$ and every $k\ge2$,
\[
  \left\|\sum_{x\in D}a_x\xi_x\right\|_k
  \le C\sqrt k\left(\sum_{x\in D}a_x^2\right)^{1/2},
  \qquad
  \left\|X\right\|_k=(\E[|X|^k])^{1/k}.
\]
\end{lemma}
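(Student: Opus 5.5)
We prove the Khintchine inequality through the moment generating function: first we show that $X=\sum_{x\in D}a_x\xi_x$ is sub-Gaussian with variance proxy $\|a\|_2^2:=\sum_x a_x^2$, then we turn the resulting tail bound into a moment bound. By homogeneity we normalize $\|a\|_2=1$; the case $a=0$ is trivial.

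\emph{Step 1 (MGF bound).} For any $\lambda\in\mathbb{R}$, independence of the $\xi_x$ gives
\[
\E[e^{\lambda X}]=\prod_{x\in D}\cosh(\lambda a_x).
\]
Comparing Taylor series termwise, and using $(2k)!\geq 2^k k!$, we get $\cosh u\le e^{u^2/2}$. Hence $\E[e^{\lambda X}]\le e^{\lambda^2/2}$.

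\emph{Step 2 (tail bound).} By Markov's inequality applied to $e^{\lambda X}$ with the choice $\lambda=u$, we get $\Prob[X\ge u]\le e^{-u^2/2}$. Since $-X$ has the same distribution as $X$, the same bound holds for the lower tail, so
\[
\Prob[|X|\ge u]\le 2e^{-u^2/2}.
\]

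\emph{Step 3 (moments).} We integrate the tail via the layer-cake formula:
\[
\E|X|^k=\int_0^\infty k u^{k-1}\Prob[|X|\ge u]\,du\le 2k\int_0^\infty u^{k-1}e^{-u^2/2}\,du .
\]
Substituting $v=u^2/2$ turns the right-hand side into $k\,2^{k/2}\Gamma(k/2)$. Using $\Gamma(k/2)\le (k/2)^{k/2}$ for $k\ge2$, this is at most $k\cdot k^{k/2}$. Taking $k$th roots gives
\[
\|X\|_k\le k^{1/k}\sqrt{k}\le e^{1/e}\sqrt{k},
\]
so the inequality holds with $C=e^{1/e}$ (any absolute constant suffices). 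Undoing the normalization yields the stated bound for general $a$.

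The only step requiring care is the Gamma-function estimate in Step 3. It is elementary, for instance via Stirling's formula or the bound $\Gamma(y)\le y^{y}$ for $y\ge1$. The value $k/2=1/2$ never arises, since $k\ge 2$ forces $k/2\ge1$, and the theorem only asserts the bound for $k\ge2$, so real, non-integer $k$ are handled by the same computation.
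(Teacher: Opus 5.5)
Your proof is correct: the MGF bound via $\cosh u\le e^{u^2/2}$, the sub-Gaussian tail $\Prob[|X|\ge u]\le 2e^{-u^2/2}$, and the layer-cake integration with $\Gamma(y)\le y^{y}$ for $y\ge 1$ all check out, and they give $C=e^{1/e}$. The paper does not prove this lemma itself but cites Vershynin, where it follows by the same route (Hoeffding's MGF bound, then the standard tail-to-moment conversion for sub-Gaussian variables), so your proposal matches the intended argument.
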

We essentially reduce to this statement by introducing a random perfect matching over $D$, where opposite signs are assigned to endpoints of the matching. 

\begin{proof}[Proof of Lemma~\ref{lb:lem:signs}]
Choose a uniformly random perfect matching between the elements of the set $D$.
For each pair $\{u,v\}$ in this matching, independently set
$(\sigma_u,\sigma_v)=(-1,+1)$ or $(+1,-1)$ with equal probability.
The resulting sign vector $\sigma$ satisfies the conditions of the lemma. 

Conditional on a fixed matching, each pair $\{u,v\}$ contributes
$\pm(a_u-a_v)$ to the sum $\sum_xa_x\sigma_x$, with independent uniform
	choices of sign across pairs. Applying the Khintchine inequality (Lemma~\ref{lem:khintchine})
to this Rademacher sum gives
\[
  \left(\E\left[\left|\sum_xa_x\sigma_x\right|^k
      \,\middle|\,\text{matching}\right]\right)^{1/k}
  \le C\sqrt{k}
    \left(\sum_{\{u,v\}\text{ matched}}(a_u-a_v)^2\right)^{1/2}.
\]
Since $(a_u-a_v)^2\le2(a_u^2+a_v^2)$ and each point belongs to one pair,
\[
  \sum_{\{u,v\}\text{ matched}}(a_u-a_v)^2
  \le2\sum_{x\in D}a_x^2.
\]
We raise the conditional moment bound to the $k$th power, and note that the bound holds pointwise for every matching, giving
\[
  \E\left[\left|\sum_xa_x\sigma_x\right|^k\right]
  \le\left(C\sqrt{2k}\left(\sum_{x\in D}a_x^2\right)^{1/2}\right)^k.
\]
Taking $k$th roots proves the lemma with $C_0=\sqrt{2}C$.
\end{proof}

\end{document}